\renewcommand\footnotetextcopyrightpermission[1]{} 
\newtheorem{problem}{Problem}
\newcommand{\PreserveBackslash}[1]{\let\temp=\\#1\let\\=\temp}
\newcolumntype{C}[1]{>{\PreserveBackslash\centering}p{#1}}
\newcolumntype{R}[1]{>{\PreserveBackslash\raggedleft}p{#1}}
\newcolumntype{L}[1]{>{\PreserveBackslash\raggedright}p{#1}}
\def\defeq{\stackrel{\mathrm{def}}{=}}
\def\sizeof#1{\left|#1  \right|}
\def\eps{\epsilon}
\def\trace#1{\mathrm{Tr} \left(#1 \right)}
\def\norm#1{\left\| #1 \right\|}
\def\smallnorm#1{\| #1 \|}
\def\calG{\mathcal{G}}
\def\calN{\mathcal{N}}
\def\norm#1{\left\| #1 \right\|}
\def\kh#1{\left( #1 \right)}
\def\ceil#1{\left\lceil #1 \right\rceil}
\def\defeq{\stackrel{\mathrm{def}}{=}}
\newcommand{\removelatexerror}{\let\@latex@error\@gobble}
\newcommand{\rea}{\mathbb{R}}
\newcommand\LL{\bm{\mathit{L}}}
\newcommand\Otil{\widetilde{O}}
\def\defeq{\stackrel{\mathrm{def}}{=}}
\def\trace#1{\mathrm{Tr} \left(#1 \right)}
\def\sizeof#1{\left|#1  \right|}
\newcommand{\one}{\mathbf{1}}
\newcommand\WW{\boldsymbol{\mathit{W}}}
\newcommand\XX{\boldsymbol{\mathit{X}}}
\newcommand\yy{\boldsymbol{\mathit{y}}}
\newcommand\xx{\boldsymbol{\mathit{x}}}
\newcommand\aaa{\boldsymbol{\mathit{a}}}
\newcommand\bb{\boldsymbol{\mathit{b}}}
\newcommand\cc{\boldsymbol{\mathit{c}}}
\newcommand\ee{\boldsymbol{\mathit{e}}}
\newcommand\pp{\boldsymbol{\mathit{p}}}
\newcommand\hh{\boldsymbol{\mathit{h}}}
\renewcommand\SS{\boldsymbol{\mathit{S}}}
\renewcommand\AA{\boldsymbol{\mathit{A}}}
\newcommand\BB{\boldsymbol{\mathit{B}}}
\newcommand\bBB{\bar{\BB}}
\newcommand\CC{\boldsymbol{\mathit{C}}}
\newcommand\DD{\boldsymbol{\mathit{D}}}
\newcommand\EE{\boldsymbol{\mathit{E}}}
\newcommand\PP{\boldsymbol{\mathit{P}}}
\newcommand\YY{\boldsymbol{\mathit{Y}}}
\newcommand\QQ{\boldsymbol{\mathit{Q}}}
\newcommand\vvv{\boldsymbol{\mathit{v}}}
\newcommand{\SDDMSolver}{\textsc{Solve}}
\DeclareMathOperator*{\argmax}{arg\,max}
\begin{document}
	\fancyhead{}
	\title{Maximizing  Influence of  Leaders in Social Networks}
\author{Xiaotian Zhou and Zhongzhi Zhang\footnotemark}
\affiliation{
\institution{Shanghai Key Laboratory of Intelligent Information Processing, Fudan University, Shanghai 200433, China
\\School of Computer Science, Fudan University, Shanghai 200433, China}
		\city{}
		\country{}
}
\email{{20210240043,zhangzz}@fudan.edu.cn}

\begin{abstract}

The operation of adding edges has been frequently used to the study of opinion dynamics in social networks for various purposes. In this paper, we consider the  edge addition problem for the DeGroot model of opinion dynamics in a social network with $n$ nodes and $m$ edges, in the presence of a small number $s \ll n$ of competing leaders with binary opposing opinions 0 or 1. Concretely, we pose and investigate the problem of maximizing the equilibrium overall opinion by creating $k$ new edges in a candidate edge set, where each edge is incident to a 1-valued leader and a follower node. We show that  the objective function is monotone and submodular. We then propose a simple greedy algorithm with an approximation factor $(1-\frac{1}{e})$ that approximately solves the problem in $O(n^3)$ time. Moreover, we provide a fast algorithm with a $(1-\frac{1}{e}-\epsilon)$ approximation ratio and $\tilde{O}(mk\epsilon^{-2})$ time complexity for any $\epsilon>0$, where $\tilde{O}(\cdot)$ notation suppresses the ${\rm poly} (\log n)$ factors. Extensive experiments demonstrate that our second approximate algorithm is efficient and effective, which  scales to large networks with more than a million nodes.
\end{abstract}
	
	
\begin{CCSXML}
<ccs2012>
<concept>
<concept_id>10003752.10003809.10003635</concept_id>
<concept_desc>Theory of computation~Graph algorithms analysis</concept_desc>
<concept_significance>500</concept_significance>
</concept>
<concept>
<concept_id>10003752.10010070.10010099.10003292</concept_id>
<concept_desc>Theory of computation~Social networks</concept_desc>
<concept_significance>500</concept_significance>
</concept>
<concept>
<concept_id>10003752.10003809.10003716.10011136</concept_id>
<concept_desc>Theory of computation~Discrete optimization</concept_desc>
<concept_significance>500</concept_significance>
</concept>
<concept>
<concept_id>10002951.10003227.10003351</concept_id>
<concept_desc>Information systems~Data mining</concept_desc>
<concept_significance>500</concept_significance>
</concept>
</ccs2012>
\end{CCSXML}

\ccsdesc[500]{Theory of computation~Graph algorithms analysis}
\ccsdesc[500]{Theory of computation~Social networks}
\ccsdesc[500]{Theory of computation~Discrete optimization}
\ccsdesc[500]{Information systems~Data mining}
%
%
	
\keywords{Opinion dynamics, social network,  multi-agent system, graph algorithm, influence maximization, discrete optimization, Laplacian solver}
	
	

\maketitle
\renewcommand{\thefootnote}{*}
\footnotetext[1]{Corresponding author. Zhongzhi Zhang is also with Shanghai Blockchain Engineering Research Center, as well as Research Institute of Intelligent Complex Systems, Fudan University, Shanghai 200433.}

\section{Introduction}
It has been heavily studied in the community of social sciences how opinions evolve and shape through social interactions among individuals with potentially different opinions~\cite{De74,FrJo90}. In the current digital age, social media and online social networks provide unprecedented access to social interactions and individuals' opinions, leading to a fundamental change of the way people propagate, share, and shape opinions~\cite{Le20}. As such, in the past years the problem of opinion dynamics has   received considerable attention from scientists in various disciplines~\cite{JiMiFrBu15,DoZhKoDiLi18,AnYe19}, including the AI community~\cite{DaGoMu14,FoPaSk16,AuFeGr18}. In order to understand the evolution and shaping of opinions, numerous models have been introduced~\cite{NoViTaHu20}, among which the DeGroot model~\cite{De74} is a popular one. In the  DeGroot model, the final opinions of  individuals reach consensus. Since its establishment, the DeGroot model has been extended or generalized by incorporating different factors affecting opinion formation or agreement~\cite{JiMiFrBu15,DoZhKoDiLi18}.

Within the area of opinion dynamics, in addition to the development of models explaining the spread and formation of opinions, the problem of effectively shifting opinions in a social network has also become an active direction of study in recent years~\cite{GiTeTs13,MeAsDaAmAn13,VaFaFr14,AbKlPaTs18,XuHuWu20}, 
since it is closely related to various practical application settings, such as public health campaigns, political candidates, and product marketing. Most of previous work focus on the operations on individuals in a social network so as to optimize different opinions, for example, to maximize or minimize the overall opinion of the network. However, link operations for optimizing the overall opinion has not been well studied and understood. In this paper we consider the problem of maximizing the overall opinion by adding edges, namely, making friends in a social network.


We consider a variant of the DeGroot model~\cite{MeAsDaAmAn13,VaFaFr14,MaPa19} for opinion dynamics on a social network with $n$ nodes and $m$ edges, where nodes represent individuals, and edges denote interactions among individuals. In this extended DeGroot model, the $n$ individuals are classified two classes $S$ and $F$: $S$ includes a small number $s\ll n$ of leader nodes, while the remaining $n-s$ nodes in $F$ are followers. In a real social network, a leader could be a paid promoter of a certain product or political position. The leader set $S$ is further categorized into to two subsets $S_0$ and $S_1$, which represent firm supporters of the opposing parties, containing nodes with opposing opinions 0 and 1, respectively. In the model, each $i$ node at time $t$ has a nonnegative scalar-value opinion $\xx_i(t)$ in the interval $[0,1]$. When $i$ is a follower, $\xx_i(t)$ evolves as a weighted average of the opinions $\xx_j(t-1)$ of $i$'s neighbors $j$. When $i$ is a leader, $\xx_i(t)$ never changes over time. Finally, the opinion of each follower converges to a value between 0 and 1.

We address the following problem based on the aforementioned DeGroot model: Given a social network and a positive integer $k$,   
how to create $k$ edges, each connecting a 1-valued leader and a follower, so that the overall stead-state opinion is maximized. This problem is at variance with existing work in terms of both formulation and solution. The main contributions of our work are as follows. First, we show that 
the objective function is monotone and submodular, which are established using a technique completely different from existing ones. Then, we propose two approximation algorithms solving the combinatorial optimization problem based on a greedy strategy of adding edges. The two greedy algorithms are guaranteed to have, respectively, approximation ratios $\left(1-\frac{1}{e}\right)$ and $\left(1-\frac{1}{e}-\eps\right)$, where $ \eps>0$ is the error parameter. The first algorithm has time complexity $O(n^3)$, while the second algorithm has complexity $\Otil (mk\eps^{-2})$, where the notation $\Otil (\cdot)$ hides ${\rm poly} (\log n)$ factors. Finally, comprehensive experiments on various real networks are performed to demonstrate the efficiency and effectiveness of our algorithms, as well as their performance advantages, compared with several baseline strategies of edge addition.

\section{Related Work}

In this section, we  briefly review the literature  related to our work.

The model under consideration is based on a variant of the popular DeGroot model for opinion dynamics~\cite{De74}.  Since
the time the  DeGroot model was  established, numerous extensions or  variants have been proposed by considering different aspects or processes affecting the limiting opinions.  For example, the  Friedkin-Johnson (FJ) model~\cite{FrJo90}  is a generalization of the DeGroot model, where each individual has two opinions, internal opinion and expressed opinion.  
After a long time evolution, the  expressed opinions of   individuals  in the FJ model converge but often do not reach agreement. Another  extension of the DeGroot model is the Altafini model~\cite{Al13},  where it is supposed that
the interactions between individuals are not always cooperative, but sometimes antagonistic.

Many existing studies also pay attention to variants of the  DeGroot model  by selecting  1-leaders in the presence of  competing 0-leaders, which are formulated for optimizing different objectives such as minimizing disagreement and polarization~\cite{YiPa20}, maximizing the diversity~\cite{MaPa19} and the total opinion~\cite{MeAsDaAmAn13,VaFaFr14,MaAb19}. Similar  opinion maximization problem was also studied  for the FJ model by using different strategies, including identifying a given number of 1-leaders~\cite{GiTeTs13}, as well as modifying individual's internal opinions~\cite{XuHuWu20} or susceptibility to persuasion~\cite{AbKlPaTs18,ChLiSo19}. Although the problem we address is also the maximization of the overall opinion, our strategy is optimally  selecting edges to add,  instead of leader selection.

Admittedly,  as a practical approach of graph edit,  edge addition operation has been extensively used for different   application purposes, such as improving the centrality of a node~\cite{CrDaSeVe16,ShYiZh18,DaOlSe19} and maximizing the number of spanning trees~\cite{LiPaYiZh20}. For a social network,  creating  edges corresponds to  making friends.
In the field of opinion dynamics, the problem of  adding  edges has  also been  studied in order to optimize different objectives.
For example, the edge addition strategy was exploited in~\cite{BiKlOr11,BiKlOr15}, aiming at minimizing the social cost at equilibrium in the FJ model. Again for instance, in~\cite{GaDeGiMa17} and~\cite{ChLiDe18},  creating edges was adopted  to reduce, respectively, controversy and risk of conflict. Finally,  edge recommendation was used in~\cite{AmSi19} with an aim to strategically fight opinion control in a social network. Motivated in part by these work, we exploit the manner of adding edges to maximize the overall opinion. Departing  from the literature in the area of opinion dynamics, we  present a nearly linear  algorithm for evaluating the overall opinion, which is proved to have a guaranteed approximation ratio.

\section{Preliminary}

This section is devoted to a brief introduction to some useful notations and tools, in order to facilitate the description of  problem formulation and  algorithms.

\subsection{Notations}

We denote scalars in $\rea$ by normal lowercase letters like $a,b,c$, sets by normal uppercase letters like $A,B,C$, vectors by bold lowercase letters like $\aaa, \bb, \cc$, and matrices by bold uppercase letters like $\AA, \BB, \CC$. We use $\one$ to denote  the vector  of appropriate dimensions with all entries being ones, and use $\ee_i$  to denote the $i^{\rm th}$ standard basis vector of appropriate dimension. Let $\aaa^\top$ and $\AA^\top$  denote, respectively, transpose of  vector $\aaa$ and matrix  $\AA$. Let $\trace \AA$ denote the trace of matrix $\AA$. We use  $\AA_{[i, :]}$ and $\AA_{[:, j]}$ to denote,  respectively, the $i^{\rm th}$ row  and the $j^{\rm th}$ column of matrix $\AA$. We write $\AA_{i,j}$ to denote the entry at row $i$ and column $j$ of $\AA$ and we use $\aaa_i$ to denote the $i^{\rm th}$ element of vector $\aaa$. We write sets in matrix subscripts to denote submatrices. For example, $\AA_{F,S}$ represents the submatrix of $\AA$ with row and  column indices in $F$ and  $S$, respectively. In addition,
we use $ \AA_{F}$ denotes the submatrix of $\AA$ obtained from $\AA$ with both the row and column indices in $F$.  

For a matrix $\XX\in\mathbb{R}^{m\times n}$ with entries $\XX_{i,j}$, $i=1,2,\cdots,m$ and $j=1,2,\cdots,n$, its Frobenius norm $\norm{\XX}_{F}$ is
\begin{equation*}
	\norm{\XX}_{F}\defeq\sqrt{\sum_{i=1}^{m}\sum_{j=1}^{n}\XX_{i,j}^2}=\sqrt{\trace{\XX^\top\XX}}.
\end{equation*}


\begin{definition}
Let $a,b>0$ be positive scalars.  $a$ is called an $\epsilon$-approximation ($0 < \epsilon < 1/2$) of $b$ if the following relation holds: $  (1-\epsilon)b \leq a \leq (1+\epsilon)b$, which is denoted $a \mathop{\approx}\limits^{\epsilon} b$ for simplicity.
\end{definition}
For a finite set $X$, let $2^X$ be the set of all  subsets of $X$.  Let $f: 2^X \to \mathbb{R}$ be a set function on $2^X$.  For any subset $T \subseteq W \subseteq X$ and any element $a \in X \setminus W$, the function $f$ is called \textit{submodular} if
$f(T \cup \{a\}) -f(T) \geq f(W \cup \{a\}) -f(W)$. $f: 2^X \to \mathbb{R}$ is called \textit{monotone increasing} if for any subset $T \subseteq W \subseteq X$, 	$f(T) \leq f(W)$ holds true.

\subsection{Graphs and Related Matrices}


Let $\calG = (V,E)$ be a connected undirected network with $n = |V|$ nodes and  $m = |E|$ edges, where $V$ is the set of nodes and $E \subseteq V \times V$ is the set of edges. The adjacency relation of all nodes in $\calG$ is encoded in the adjacency matrix $\AA$,  whose entry $\AA_{i, j}=1$ if  $i$ and $i$ are adjacent, and $\AA_{i, j}=0$ otherwise. Let $\calN_i$ be the set of neighbouring nodes of  $i$. Then degree $d_i$ of node  $i$ is  $\sum_{j=1}^n \AA_{i, j}=\sum_{j\in \calN_i} \AA_{i, j}$. The degree diagonal matrix of   $\calG$ is $\DD={\rm diag}(d_1, d_2, \cdots, d_n)$, and the Laplacian matrix $\LL$ of  $\calG$ is $\LL = \DD - \AA$. If we fix an arbitrary orientation for all  edges in $\calG$, then we can define the signed edge-node incidence matrix  $\BB_{m\times n}$ of  graph $\calG$,  whose entries are defined as follows:  $\BB_{e,u}= 1$ if node $u$ is the  head of edge $e$, $\BB_{e,u} = -1$ if $u$ is the tail of $e$, and $\BB_{e,u} = 0$ otherwise.  For an oriented edge $e\in E$ with end nodes   $u$ and $v$, we define $\bb_e = \bb_{u,v}=\ee_u - \ee_v$ if $u$ and $v$ are, respectively, the head and tail of $e$. Then $\LL$ can be written as $\LL = \BB^\top \BB$ or $\LL = \sum\nolimits_{e\in E} \bb_e \bb_e^\top$,  meaning that $\LL$ is  positive semi-definite. Moreover, for a nonnegative diagonal matrix $\XX$ with at least  one nonzero diagonal entry, we have the fact that every element of $\kh{\LL + \XX}^{-1}$ is positive~\cite{McNeSc95,LiPeShYiZh19,MaAb19}.

\section{Problem Formulation}

In this section, we formulate the problem for optimizing the opinion influence of leaders in a social network described by a graph $\calG(V, E)$, where  nodes represent  individuals or agents, and edges represent interactions among  agents. We adopt the discrete-time leader-follower DeGroot model~\cite{MeAsDaAmAn13,VaFaFr14,MaPa19} for opinion dynamics to achieve our goal.



\subsection{Leader-Follower DeGroot Model}

In the leader-follower DeGroot model on graph $\calG$,  node set is divided into two disjoint parts, a set $S$ of a small number  $s \ll n$ leader nodes  and a follower set $F$ containing the remaining  $n -s$ nodes, where $S$ is the union of  two disjoint subsets $S_0$ and $ S_1$ presenting two competing parties and satisfying $S = S_0 \cup S_1$. Each node has a nonnegative scalar-value opinion belongs to  interval $[0,1]$. Let  $\xx_i(t)$ be  the  opinion of node $i$ at time $t$. If $i\in S_0$, $\xx_i(t)$ keeps unchanged, meaning  $\xx_i(t)=0$ for all $t$;   if $i\in S_1$, $\xx_i(t)=1$  for all $t$. For a follower node $i \in F$  with an initial opinion $\xx_i(0)$, it will update its opinion by averaging all its neighbours' opinions as
\begin{equation}\label{eq:dym}
    \xx_i(t+1) = \frac{\sum_{j \in \calN_i}\AA_{i,j}\xx_j(t)}{\sum_{j \in \calN_i}\AA_{i,j}}.
\end{equation}
Let $\xx_F(t)$ and $\xx_S(t)$ be, respectively, opinion vectors of followers and leaders at time $t$.  For large $t$, $\xx_F(t)$ and $\xx_S(t)$ converge. Let  $\xx_S(\infty)=\lim_{t\rightarrow \infty}\xx_S(t)$ and  $\xx_F(\infty)=\lim_{t\rightarrow \infty}\xx_F(t)$. We write   matrices $\AA$ and $\LL$ in block form as
\begin{align*}
	\AA = \begin{pmatrix}
	\AA_{S, S} & \AA_{S, F}\\
	\AA_{F, S} & \AA_{F, F}
	\end{pmatrix} ,\,
	\LL = \begin{pmatrix}
	\LL_{S, S} & \LL_{S, F}\\
	\LL_{F, S} & \LL_{F, F}
	\end{pmatrix}\, .
	\end{align*}
Then, the  stationary opinions can be determined as follows~\cite{YiCaPa19}:
\begin{align}
  \xx_S(\infty) =& \xx_S(0), \\
  \xx_F(\infty) =& \LL_{F,F}^{-1}\AA_{F,S}\xx_S(0)=\LL_{F}^{-1}\AA_{F,S}\xx_S(0).\label{eq:dymF}
\end{align}

\subsection{Problem Statement}\label{ProbStat}

Equation~\eqref{eq:dymF} shows that for any follower node, its equilibrium opinion is determined by the opinions of leaders, independent of its own initial opinion.  Thus, the leaders affect the opinions of followers. Let $H(\calG)$ denote the sum of stable opinions over all followers in graph $\calG$, given by $H(\calG)=\sum_{i \in F} \xx_i(\infty) = \boldsymbol{1}^{\top}\LL_{F}^{-1} \AA_{F,S} \xx_S(0)$, which encodes the influence of leaders on the opinions of followers.  In the sequel, by using the quantity  $H(\calG)$,  we formulate the problem for maximizing the influence of the $1$-leader by adding a fixed number of edges from a candidate set of edges. For simplicity, we define a vector $\bb = \AA_{F,S} \xx_S(0)$. By definition, for each node $i \in F$, $\bb_i$ is in fact equal to the number of edges connecting node $i$ and nodes in  set $S_1$.  Then, the vector for equilibrium opinions of followers can be  simplified as
$\xx_F(\infty) = \LL_{F}^{-1} \bb$ and $H(\calG)$ can be rewritten as $H(\calG)=\boldsymbol{1}^{\top}\LL_{F}^{-1} \bb$.

For a connected undirected graph $\calG(V,E)$, if we add a set $T$ of  nonexistent edges from a candidate set $Q$ to $\calG$ forming a new graph $\calG(T)=(V,E \cup T)$, where each new edge  connects a 1-leader node in $S_1$ and a follower node in $F$, the overall equilibrium opinion of follower nodes in $\calG(T)=(V,E \cup T)$  will increase. We next prove this property. For simplicity, we simplify $H(\calG(T))$ as $H(T)$, which means  $H(\calG)=H(\emptyset)$.  Then, we have the following result.

\begin{lemma}\label{lem:update}
Let $\calG=(V,E)$ be a connected graph with nonempty follower set $F$,  0-leader set $S_0$, and 1-leader set $S_1$. Let $e \notin E$ be a potential  edge incident to node $i \in F$ and a 1-leader in $S_1$. Define
$\Delta(e) \defeq H(\{e\})-H(\emptyset)$. Then,
\begin{equation}\label{Eq:inc}
  \Delta(e) = \frac{\boldsymbol{1}^\top \LL_F^{-1} \ee_i(1-\ee_i^\top \LL_F^{-1} \bb)}{1+\ee_i^\top\LL_F^{-1}\ee_i}
\end{equation}
and $\Delta(e)\geq 0$.
\end{lemma}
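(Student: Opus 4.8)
The plan is to reduce the entire statement to a single rank-one update of the grounded Laplacian $\LL_F$ together with a matching rank-one update of the source vector $\bb$, and then invoke the Sherman--Morrison formula. The first step is to record precisely how the problem data change when the edge $e$, joining the follower $i\in F$ to some $1$-leader in $S_1$, is inserted. Adding $e$ raises the degree of $i$ by one while creating no new edge inside $F$ (its other endpoint is a leader), so the $F$-block of the Laplacian becomes $\LL_F+\ee_i\ee_i^\top$. Since the new neighbour lies in $S_1$ and therefore carries opinion value $1$, the entry $\bb_i=\kh{\AA_{F,S}\xx_S(0)}_i$ also increases by one, i.e.\ $\bb$ becomes $\bb+\ee_i$. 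Hence $H(\{e\})=\boldsymbol{1}^\top\kh{\LL_F+\ee_i\ee_i^\top}^{-1}\kh{\bb+\ee_i}$.

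For the algebraic identity I would apply Sherman--Morrison to write $\kh{\LL_F+\ee_i\ee_i^\top}^{-1}=\LL_F^{-1}-\frac{\LL_F^{-1}\ee_i\ee_i^\top\LL_F^{-1}}{1+\ee_i^\top\LL_F^{-1}\ee_i}$, substitute it together with $\bb+\ee_i$, and subtract $H(\emptyset)=\boldsymbol{1}^\top\LL_F^{-1}\bb$. After the $\boldsymbol{1}^\top\LL_F^{-1}\bb$ terms cancel, the remainder factors as $\kh{\boldsymbol{1}^\top\LL_F^{-1}\ee_i}$ times $1-\frac{\ee_i^\top\LL_F^{-1}\bb+\ee_i^\top\LL_F^{-1}\ee_i}{1+\ee_i^\top\LL_F^{-1}\ee_i}$, and putting the bracket over a common denominator collapses it to $\frac{1-\ee_i^\top\LL_F^{-1}\bb}{1+\ee_i^\top\LL_F^{-1}\ee_i}$, which is exactly \eqref{Eq:inc}. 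This part is routine once the two rank-one updates are in place.

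The sign analysis is where the real content lies, and I would argue factor by factor. The denominator $1+\ee_i^\top\LL_F^{-1}\ee_i$ is strictly positive because $\LL_F$ is positive definite ($\calG$ is connected and the leader set is nonempty), so its inverse has a positive $(i,i)$ entry. Next, writing $\LL_F=\LL_F^{\mathrm{sub}}+\XX$, where $\LL_F^{\mathrm{sub}}$ is the Laplacian of the subgraph induced on $F$ and $\XX$ is the diagonal matrix whose $j$-th entry counts the edges from follower $j$ to leaders, the cited positivity fact about $\kh{\LL+\XX}^{-1}$ yields that every entry of $\LL_F^{-1}$ is positive; in particular $\boldsymbol{1}^\top\LL_F^{-1}\ee_i\ge 0$. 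It remains to show $1-\ee_i^\top\LL_F^{-1}\bb\ge 0$. Here I would observe that $\ee_i^\top\LL_F^{-1}\bb=\xx_i(\infty)$ by \eqref{eq:dymF}, so the claim is simply that equilibrium opinions never exceed $1$. To see this, note that $\LL_F\boldsymbol{1}$ equals the vector $\dd^{\mathrm{ext}}$ of edge-counts from each follower to all leaders, while $\bb$ counts only edges to $S_1$; thus $\dd^{\mathrm{ext}}-\bb\ge\zeov$ entrywise, and left-multiplying by the entrywise-nonnegative matrix $\LL_F^{-1}$ gives $\boldsymbol{1}-\xx_F(\infty)=\LL_F^{-1}\kh{\dd^{\mathrm{ext}}-\bb}\ge\zeov$. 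Combining the three signs yields $\Delta(e)\ge 0$.

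The main obstacle is the last inequality $\xx_i(\infty)\le 1$: it is intuitively obvious from the averaging interpretation of the dynamics, but a clean matrix proof requires both the row-sum identity $\LL_F\boldsymbol{1}=\dd^{\mathrm{ext}}$ and the entrywise nonnegativity of $\LL_F^{-1}$, the latter resting on the connectivity of $\calG$ through the cited positivity fact.
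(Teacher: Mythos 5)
Your proposal is correct and follows the same route as the paper: both identify the edge addition with the rank-one updates $\LL_F\to\LL_F+\ee_i\ee_i^\top$, $\bb\to\bb+\ee_i$, apply Sherman--Morrison, and simplify to \eqref{Eq:inc}. The only divergence is the sub-step $1-\ee_i^\top\LL_F^{-1}\bb\geq 0$: the paper disposes of it by appealing to the model interpretation (equilibrium opinions lie in $[0,1]$, so $\ee_i^\top\LL_F^{-1}\bb=\xx_i(\infty)\leq 1$), whereas you prove it purely algebraically via the row-sum identity $\LL_F\boldsymbol{1}=\dd^{\mathrm{ext}}\geq\bb$ entrywise and the entrywise nonnegativity of $\LL_F^{-1}$. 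Your version is slightly longer but self-contained at the matrix level, which is a genuine (if modest) gain: it does not presuppose the convergence-to-$[0,1]$ property of the dynamics, instead deriving the bound from the same positivity fact about $(\LL+\XX)^{-1}$ that both arguments already use.
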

\begin{proof}
By definition, one obtains $ H(\{e\})= H(\calG(\{e\}))=\boldsymbol{1}^\top (\LL_F+\ee_i\ee_i^\top)^{-1}(\bb+\ee_i)$.  Exploiting Sherman-Morrison formula~\cite{Me73}, it follows that
\begin{equation*}
\left( \LL_{F} + \ee_i\ee_i^\top \right)^{-1} = \LL_{F}^{-1} -  \frac{\LL_{F}^{-1}\ee_i\ee_i^\top\LL_{F}^{-1}}{1+\ee^\top_i\LL_{F}^{-1}\ee_i}.
\end{equation*}
Then, $\Delta(e)$ is evaluated as
\begin{align*}
   \Delta(e) &=\boldsymbol{1}^\top\left(\LL_{F}^{-1} - \frac{\LL_{F}^{-1}\ee_i\ee_i^\top\LL_{F}^{-1}}{1+\ee^\top_i\LL_{F}^{-1}\ee_i}\right)(\bb+\ee_i) -\boldsymbol{1}^\top\LL_F^{-1}\bb \\
  & = \frac{\boldsymbol{1}^\top \LL_F^{-1} \ee_i(1-\ee_i^\top \LL_F^{-1} \bb)}{1+\ee_i^\top\LL_F^{-1}\ee_i}.
\end{align*}
Notice that $\ee_i^\top \LL_F^{-1} \bb$ is the equilibrium opinion of follower node $i$ in graph $\calG$, which is no more than 1. On the other hand, any entry of matrix $\LL_{F}^{-1}$ is nonnegative. Hence, $\Delta(e) \geq 0$.
\end{proof}

Lemma~\ref{lem:update} indicates that the addition of any nonexisting  edge  connecting a 1-leader and a follower will lead to an increase of  the overall equilibrium opinion of followers.  Then we naturally raise the following  problem called \textsc{OpinionMaximization}: How to optimally select a set $T$ with $k$ edge in a candidate edge set $Q$,  so that the influence of 1-leaders quantified by the overall opinion of leaders in the new graph is maximized.  Mathematically, the influence maximization problem   can be formally stated  as follows.



\begin{problem}[Opinion Maximization]\label{prob:om}
Given a connected undirected graph $\calG=(V, E)$, a nonempty  set $S_0$ of  0-valued leaders, a set $S_1 \ne \emptyset$ of 1-leaders, a nonempty  set $F=V \setminus (S_0 \cup S_1)$ of followers, the candidate edge set $Q \subseteq F\times S_1$ consisting of nonexistent edges connecting 1-leaders and follower nodes, and an integer $k$, we aim to find the edge set $T \subseteq Q$ with $|T|= k$, and add these chosen $k$ edges to graph $\calG$ forming a new graph $\calG(T)=(V,E \cup T)$, so that the overall opinion $H(T)$ is maximized. This set optimization problem can be formulated as:
\begin{equation}\label{Pro:2}
  T = \arg \max_{P \subseteq Q, |P|= k} H(P).
\end{equation}
\end{problem}


Similarly, we can maximize the influence of  0-valued leaders by adding  edges to  graph $\calG$ to minimize the overall opinion of followers, which   is  called \textsc{OpinionMinimization} problem.  Since for both problems, the proof and algorithms are similar, in what follows, we only consider  the  \textsc{OpinionMaximization} problem.

Note that given set $S_1$ of 1-leaders,  the number of absent edges connecting 1-leaders and followers in a sparse network is large with order $O(|S_1|n)$.  However,  if the stationary  opinion of follower node $i$ is close to 1, the benefit of adding a new edge between $i$ and a node in set $S_1$ is relatively low. Thus, we will not   inspect all of $O(|S_1|n)$ nonexistent edges,  but  focus on a small number of candidate edges with $|Q| \ll |S_1|n$.  Specifically, we set a threshold  $\eta$  ($1/2 < \eta < 1$) and determine  the candidate set $Q$ according to the  following rule. For each follower node $i$ not adjacent to  any 1-leader  in  graph $\calG$, if $\xx_i(\infty) < \eta$,  $Q$ includes the nonexistent edges connecting $i$ and 1-leaders. Similar restriction to the candidate set $Q$ of edges is previously used in~\cite{AmSi19}, where the edges in $Q$ are called ``good'' candidate edges.

\subsection{Properties of Objective Function}

Here  we prove that as the objective function of  Problem~\ref{prob:om}, the set function  $H(\cdot)$ has two desirable properties:  monotonicity and submodularity. First, by Lemma~\ref{lem:update}, it is immediate that function $H(\cdot)$ is  monotonically increasing.




\begin{theorem}(Monotonicity) \label{thm:MI}
For two subsets $B$ and $ T$  of edges satisfying $B\subset T \subset Q$,  $H(B) \leq H(T)$ holds.
\end{theorem}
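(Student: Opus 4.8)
The plan is to prove monotonicity by induction on the number of edges added, reducing the general statement $H(B) \le H(T)$ to the single-edge increment already controlled by Lemma~\ref{lem:update}. Since $B \subset T \subseteq Q$, I can write $T = B \cup \{e_1, e_2, \ldots, e_r\}$ where the $e_j$ are the edges in $T \setminus B$, and form the chain $B = B_0 \subset B_1 \subset \cdots \subset B_r = T$ with $B_j = B_{j-1} \cup \{e_j\}$. It then suffices to show $H(B_{j-1}) \le H(B_j)$ for each single step, and sum the telescoping inequalities.

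The key point is that Lemma~\ref{lem:update} was stated for adding one edge to the base graph $\calG$, but the same argument applies verbatim when the base graph already contains some added edges. Concretely, adding edge $e_j$ incident to follower $i_j \in F$ amounts to replacing $\LL_F$ by $\LL_F + \ee_{i_j}\ee_{i_j}^\top$ (where $\LL_F$ here denotes the follower Laplacian block of the \emph{current} graph $\calG(B_{j-1})$) and replacing $\bb$ by $\bb + \ee_{i_j}$. First I would observe that the two structural facts driving the sign of $\Delta(e)$ in Lemma~\ref{lem:update} are preserved along the chain: (i) every entry of the inverse follower-Laplacian block remains nonnegative, because $\LL_{F}$ of $\calG(B_{j-1})$ is still of the form ``Laplacian block plus nonnegative diagonal'' (the added self-loop terms $\ee_{i}\ee_{i}^\top$ only increase the diagonal), so the fact cited from~\cite{McNeSc95,LiPeShYiZh19,MaAb19} that $(\LL_F + \XX)^{-1}$ has positive entries still holds; and (ii) the equilibrium opinion $\ee_{i_j}^\top \LL_F^{-1}\bb$ of any follower in the current graph is still at most $1$, since all follower opinions remain convex combinations of the leader values $0$ and $1$. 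Granting these, the same Sherman--Morrison computation yields
\begin{equation*}
  H(B_j) - H(B_{j-1}) = \frac{\boldsymbol{1}^\top \LL_F^{-1}\ee_{i_j}\left(1 - \ee_{i_j}^\top \LL_F^{-1}\bb\right)}{1 + \ee_{i_j}^\top \LL_F^{-1}\ee_{i_j}} \ge 0,
\end{equation*}
with $\LL_F$ and $\bb$ interpreted for $\calG(B_{j-1})$.

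Summing over $j = 1, \ldots, r$ gives $H(T) - H(B) = \sum_{j=1}^r \bigl(H(B_j) - H(B_{j-1})\bigr) \ge 0$, which is the claim. The main obstacle is making rigorous the assertion that Lemma~\ref{lem:update} transfers cleanly to an arbitrary intermediate graph $\calG(B_{j-1})$ rather than only to the original $\calG$; this requires checking that both invariants (i) and (ii) are genuinely inductive, i.e.\ that after adding $e_j$ the new graph still has a follower-Laplacian block whose inverse is entrywise nonnegative and whose induced follower opinions lie in $[0,1]$. The nonnegativity is the cleaner of the two, following directly from the diagonal-perturbation structure; the bound $\ee_i^\top \LL_F^{-1}\bb \le 1$ is where I would be most careful, arguing it from the probabilistic/averaging interpretation of~\eqref{eq:dymF} that each follower's stationary opinion is a weighted average of leader opinions in $\{0,1\}$ and hence never exceeds $1$, a property that is preserved under the edge additions considered here since they only ever connect a follower to a $1$-leader.
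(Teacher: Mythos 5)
Your proof is correct and follows essentially the same route as the paper: the paper obtains monotonicity directly from Lemma~\ref{lem:update}, implicitly applying it along a chain of single-edge additions exactly as you do (the same iterative device it makes explicit in the submodularity proof). Your additional care in verifying that the entrywise nonnegativity of the inverse follower-Laplacian block and the bound $\ee_i^\top\LL_F^{-1}\bb\le 1$ persist for the intermediate graphs $\calG(B_{j-1})$ simply fills in details the paper leaves implicit.
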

Next, we  show that  function $H(\cdot)$ is  submodular.
\begin{theorem}(Submodularity) \label{thm:SM}
For  two subsets $B$ and $ T$  obeying   $B \subset T \subseteq Q$ and any edge $e \in Q \setminus T$ ,
	\begin{align}\label{thm:SMeq}
	H(T\cup \{ e \}) - H(T ) \leq H(B\cup \{ e \}) - H(B).
	\end{align}
\end{theorem}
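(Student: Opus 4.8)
The plan is to prove the equivalent statement that the marginal gain is non-increasing in the base set. Writing $\rho(e\mid R)\defeq H(R\cup\{e\})-H(R)$, inequality~\eqref{thm:SMeq} is exactly $\rho(e\mid T)\le\rho(e\mid B)$, and since $B\subset T$ it suffices to prove the single-element version $\rho(e\mid R\cup\{f\})\le\rho(e\mid R)$ for every $R$ with $B\subseteq R\subsetneq T$ and every $f\in T\setminus R$; telescoping over the elements of $T\setminus B$ then yields the theorem. The advantage of this reduction is that each marginal gain can be read off from Lemma~\ref{lem:update}: applying that lemma not to $\calG$ but to the augmented graph $\calG(R)$, whose follower Laplacian $\LL_F(R)$ and $1$-leader vector $\bb^{(R)}$ replace $\LL_F$ and $\bb$, and writing $\MM\defeq\LL_F(R)^{-1}$ with $e$ incident to follower $i$, gives
\[
\rho(e\mid R)=\frac{\big(\one^\top\MM\ee_i\big)\big(1-\ee_i^\top\MM\bb^{(R)}\big)}{1+\ee_i^\top\MM\ee_i}.
\]

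First I would compute how the three scalar ingredients change when a single edge $f$, incident to follower $j$, is inserted. The Sherman--Morrison formula gives $\LL_F(R\cup\{f\})^{-1}=\MM-\frac{\MM\ee_j\ee_j^\top\MM}{1+\MM_{jj}}$, while the leader vector is updated by $\bb^{(R\cup\{f\})}=\bb^{(R)}+\ee_j$. Abbreviating $a_i=\one^\top\MM\ee_i$, $x_i=\ee_i^\top\MM\bb^{(R)}$, and $\MM_{ij}=\ee_i^\top\MM\ee_j$, a short calculation yields the three update rules
\begin{align*}
a_i &\mapsto a_i-\frac{\MM_{ij}\,a_j}{1+\MM_{jj}}, \\
(1-x_i) &\mapsto (1-x_i)-\frac{\MM_{ij}\,(1-x_j)}{1+\MM_{jj}}, \\
\MM_{ii} &\mapsto \MM_{ii}-\frac{\MM_{ij}^2}{1+\MM_{jj}}.
\end{align*}
The clean point here is that the extra contribution coming from the change $\bb^{(R)}\mapsto\bb^{(R)}+\ee_j$ cancels exactly against part of the rank-one correction, so that $1-x_i$ transforms by the same pattern as $a_i$; I would highlight this cancellation, since it is what makes the subsequent estimate tractable.

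Substituting these updates into the two expressions for $\rho(e\mid R)$ and $\rho(e\mid R\cup\{f\})$ and clearing the (positive) denominators reduces the desired inequality to a single scalar inequality in the quantities $a_i,a_j,1-x_i,1-x_j,\MM_{ii},\MM_{jj},\MM_{ij}$. Here lies the main obstacle: the numerator $a_i(1-x_i)$ \emph{and} the denominator $1+\MM_{ii}$ both decrease under the update, so the sign of the change is not fixed by monotonicity alone and a genuine inequality must be extracted. To close it I would use (i) the nonnegativity of every entry of $\MM=\LL_F(R)^{-1}$ recorded in the preliminaries; (ii) the fact that each $x_\ell$ is an equilibrium opinion, hence lies strictly in $(0,1)$ because the connected graph has a nonempty $0$-leader set, so that $1-x_i,1-x_j>0$; and (iii) the hitting-probability relations $\MM_{ij}=h_{ij}\MM_{jj}=h_{ji}\MM_{ii}$, where $h_{ij}\in(0,1]$ is the probability that the absorbing random walk started at $i$ reaches $j$ before any leader. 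Together with the submultiplicativity $h_{ik}\ge h_{ij}h_{jk}$, these give $\MM_{ij}\le\min(\MM_{ii},\MM_{jj})$ and force the two ratios $A\defeq a_i/a_j$ and $B\defeq(1-x_i)/(1-x_j)$ to lie in the interval $[\MM_{ij}/\MM_{jj},\,\MM_{ii}/\MM_{ij}]$.

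Finally, dividing the reduced inequality through by $(1+\MM_{ii})\,a_j\,(1-x_j)$, it becomes a bilinear expression in $A$ and $B$; since a function that is affine in each variable separately attains its extrema over a rectangle at the corners, it then suffices to verify the inequality at the four endpoints of this box, each of which collapses to an elementary estimate using $\MM_{ij}\le\min(\MM_{ii},\MM_{jj})$. I expect assembling the structural relations of step (iii) — in particular the ratio bounds on $A$ and $B$ — to be the crux of the argument, with the corner check being routine once the bilinear form is recognized.
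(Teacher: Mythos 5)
Your proposal is correct, but it takes a genuinely different route from the paper's proof. The paper shares your outer skeleton---it too reduces to adding one extra edge (proving $H(\{e_1,e_2\})-H(\{e_1\})\le H(\{e_2\})-H(\emptyset)$ and then iterating, as you telescope)---but its core argument is continuous rather than discrete: it interpolates with $\Omega(x,y)=(\LL_F+x\EE_{ii}+y\EE_{jj})^{-1}$ and $f(x,y)=\one^\top\Omega(x,y)(\bb+x\ee_i+y\ee_j)$, and shows the mixed partial obeys the identity $f_{xy}=-\ee_i^\top\Omega(x,y)\ee_j\,\bigl(f_x+f_y\bigr)\le 0$, which needs only the two facts already in the preliminaries (entrywise nonnegativity of the inverse, and monotonicity giving $f_x,f_y\ge 0$); the case $x=y=1$ of the resulting inequality is exactly the two-edge statement. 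Your argument instead stays algebraic: the three Sherman--Morrison update rules you write down are correct (including the cancellation that makes $1-x_i$ transform like $a_i$), and the desired comparison does reduce to the bilinear inequality $(1+\MM_{ii})(A+B)\ge \MM_{ij}AB+(1+\MM_{ii})\MM_{ij}/(1+\MM_{jj})$ over the box $A,B\in[\MM_{ij}/\MM_{jj},\,\MM_{ii}/\MM_{ij}]$. I verified the corner check you deferred: using $\MM_{ij}^2\le\MM_{ii}\MM_{jj}$ (implied by your bound $\MM_{ij}\le\min(\MM_{ii},\MM_{jj})$), every corner collapses to the trivial inequality $2+\MM_{ii}+\MM_{jj}\ge 0$, so the argument closes. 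The trade-off is clear: the paper's derivative identity is shorter and entirely self-contained, whereas your proof is calculus-free and more quantitative (it exhibits exactly how each marginal gain shrinks), but it rests on structural inputs the paper never needs and which you would still have to prove---the hitting-probability/path-product relations $\MM_{\ell i}\MM_{jj}\ge \MM_{\ell j}\MM_{ji}$ behind your box bounds on $A$ and $B$ (true for inverses of grounded Laplacians, via the absorbing-walk Green's function or the inverse-M-matrix path-product property)---plus the degenerate cases $\MM_{ij}=0$ and $x_j=1$ that must be dispatched before dividing.
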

\begin{proof}
Let $e_1=(a,i)$ and $e_2=(b,j)$ be two edges in $Q$ with $a,b \in S_1$ and $i,j \in F$. We now prove that for this simple case, one has
\begin{equation}\label{eq:12}
  H(\{e_1 \cup e_2\}) - H( \{e_1\}) \leq H( \{e_2\}) - H(\emptyset ).
\end{equation}
To this end, we define matrix $\Omega(x,y) = (\LL_{F}+x\EE_{ii}+y\EE_{jj})^{-1}$ and vector $\bar{\bb}(x,y) = \bb + x\ee_i + y\ee_j$, where $\EE_{ii} = \ee_i\ee_i^\top$, $x \geq 0$ and $y \geq 0$. By definition, it is easy to verify  that the entries in $\Omega(x,y)$ are nonnegative, and inequality~\eqref{eq:12} can be rephrased as
\begin{align}\label{eq:01}
	& \boldsymbol{1}^\top\Omega(1,1)\bar{\bb}(1,1)-\boldsymbol{1}^\top\Omega(0,1)\bar{\bb}(0,1)\notag\\
	 \leq & \boldsymbol{1}^\top\Omega(1,0)\bar{\bb}(1,0)-\boldsymbol{1}^\top\Omega(0,0)\bar{\bb}(0,0).
\end{align}

In order to prove~\eqref{eq:01}, we introduce a function $f(x,y) \defeq \boldsymbol{1}^\top\Omega(x,y)$ $\bar{\bb}(x,y)$. Next we prove
\begin{equation}\label{eq:xy}
f(x,y)-f(0,y)\leq f(x,0)-f(0,0),
\end{equation}
the special case $x=y=1$ of which is exactly~\eqref{eq:01}.  In order to  prove~\eqref{eq:xy}, it suffices to prove
$f_{xy}(x,y) \leq 0$, the proof of which involves the following  matrix derivative formula
\begin{align*}
	\frac{d}{dt}\AA(t)^{-1} = -\AA(t)^{-1} \frac{d}{dt}\AA(t) \AA(t)^{-1}.
\end{align*}
Since $H(\cdot)$ is a monotone increasing function, we have
\begin{align}
f_x(x,y)=-\boldsymbol{1}^\top\Omega(x,y)\EE_{ii}\Omega(x,y)\bar{\bb}(x,y)+\boldsymbol{1}^\top\Omega(x,y)\ee_i\geq 0,\notag
 \end{align}
\begin{align}
f_y(x,y)=-\boldsymbol{1}^\top\Omega(x,y)\EE_{jj}\Omega(x,y)\bar{\bb}(x,y)+\boldsymbol{1}^\top\Omega(x,y)\ee_j\geq 0.\notag
\end{align}
Then,  $f_{xy}(x,y)$ can be computed as
\begin{align}\label{eq:2xy}
 f_{xy}(x,y)
    &= \boldsymbol{1}^\top\Omega(x,y)\EE_{jj}\Omega(x,y)\EE_{ii}\Omega(x,y)\bar{\bb}(x,y)+ \notag\\
    & \quad\boldsymbol{1}^\top\Omega(x,y)\EE_{ii}\Omega(x,y) \EE_{jj}\Omega(x,y)\bar{\bb}(x,y)- \notag\\
    & \quad \boldsymbol{1}^\top\Omega(x,y)\EE_{ii}\Omega(x,y)\ee_j -\boldsymbol{1}^\top\Omega(x,y)\EE_{jj}\Omega(x,y)\ee_i \notag\\
    &=-\ee_i^\top \Omega(x,y) \ee_j (f_x(x,y)+f_y(x,y)) \leq 0.
\end{align}
Combining~\eqref{eq:01},~\eqref{eq:xy} and~\eqref{eq:2xy} yields~\eqref{eq:12}.

We assume  $|T \setminus B|=z$ and $T \setminus B=\{e_1, e_2,\ldots, e_z\}$, and  define  a graph sequence $\calG^{(i)}$ ($i=0,1, 2, \ldots,  z$) with identical node set $V$ but different  edge set $E^{(i)}$, obeying  $\calG^{(0)}=\calG$, $E^{(0)}=E$, and $E^{(i)} \setminus E^{(i-1)}=e_i$. By iteratively applying~\eqref{eq:12} to $\calG^{(i)}$ leads to~\eqref{thm:SMeq}, we can conclude the proof of  submodularity.
\end{proof}

\section{Simple Greedy Algorithm}

Problem~\ref{prob:om} is inherently a combinatorial problem. It can be solved by the following na\"{\i}ve brute-force approach. For each set $T$ of the $\tbinom{|Q|}{k}$ possible subsets  of edges, compute the overall equilibrium opinion of follower nodes in the resultant graph when all edges in this set are added. Then, output the subset $T^*$ of edges, whose addition leads to the largest increase for the overall opinion of  followers. Although this method is simple, it is computationally impossible even for small networks, since its has an exponential complexity $O\big(\tbinom{|Q|}{k}\cdot (n-s)^3\big)$.

To tackle the exponential complexity, one often resorts to greedy heuristics.  Due to the monotonicity and submodularity of   the objective function $H(\cdot)$, a simple greedy strategy is guaranteed  to have a $(1-\frac{1}{e})$-approximation solution to  Problem~\ref{prob:om}~\cite{NeWoFi78}. Initially, we set the edge set $T$ to be empty, then $k$ edges are added from set $Q \setminus T$ iteratively. In each iteration step $i$, edge $e_i$ in set $Q$ of candidate edges  is selected, which  maximizes the overall opinion of followers. The algorithm terminates when $k$ edges are selected to be added to $T$. For every candidate edge, it needs computation of the overall opinion,
which involves matrix inversion. A direct calculation of  matrix inversion requires $O((n-s)^3)$ time, leading to a total computation complexity $O(k|Q|(n-s)^3)$.

Actually, as shown in the proof of Lemma~\ref{lem:update}, if $\LL_F^{-1}$ is already computed, adding a single edge can be
looked upon as a  rank-1 update to matrix $\LL_F^{-1}$, which can be done in  time $O((n-s)^2)$ by using the Sherman-Morrison formula~\cite{Me73},  rather than directly inverting a matrix in time $O((n-s)^3)$. This leads to our simple algorithm \textsc{Exact}$(G, S_1, S_0, Q, k)$ to solve Problem~\ref{prob:om}, which is  outlined in Algorithm \ref{alg:1}. The algorithm first calculates the inversion of matrix $\LL_F$ in time $O((n-s)^3)$. Then it works in $k$ rounds with each round containing two main operations. One is to compute $\Delta(e)$ (Line 4) in $O(n(n-s))$ time, the other is to update $\LL_F^{-1}$ in $O((n-s)^2)$ time when  a new edge $e_i$ (Line 8) is added. Therefore, the whole running time of Algorithm \ref{alg:1} is $O((n-s)^3+kn(n-s))$,  much faster than the  brute-force algorithm.

\begin{algorithm}[tb]
	\caption{\textsc{Exact}$(\calG, S_1, S_0, Q, k)$}
    \label{alg:1}
	\Input{
		A connected graph $\calG=(V,E)$;
        two disjoint node sets $S_0, S_1$;
        a candidate edge set $Q$;
        an integer $1 \leq k \leq |Q|$\\
	}
	\Output{
		$T$: A subset of $Q$ with $|T|=k$
	}
Compute $\LL_F^{-1}$ and $\bb$\;
Initialize solution $T = \emptyset$ \;
\For{$i = 1$ to $k$}{
Compute $\Delta(e)$ for each $e \in Q \setminus T$\;
Select $e_i$ s. t.  $e_i \gets \mathrm{arg\, max}_{e \in Q \setminus T} \Delta(e)$ and $e_i$ is incident to node $j\in F$\;
Update solution $T \gets T \cup \{ e_i \}$ \;
Update the graph $\calG \gets \calG(V, E \cup \{ e_i \})$ \;
Update $\LL_F^{-1} \gets \LL_{F}^{-1} -  \frac{\LL_{F}^{-1}\ee_j\ee_j^\top\LL_{F}^{-1}}{1+\ee_j^\top\LL_{F}^{-1}\ee_j}$\;
Update $\bb \gets \bb+\ee_j$

}
\Return $T$.
\end{algorithm}

Based on the well-established result~\cite{NeWoFi78}, Algorithm \ref{alg:1} yields a $(1-\frac{1}{e})$-approximation of the optimal solution to Problem \ref{prob:om}, as provided in the following theorem.
\begin{theorem}
The set $T$ returned by Algorithm \ref{alg:1} satisfies
\begin{equation*}
H(T) -H(\emptyset) \geq \left(1-\frac{1}{e}\right) \big(H(T^*) -H(\emptyset) \big),
\end{equation*}
where $T^*$ is the optimal solution to  Problem \ref{prob:om} satisfying
\begin{equation*}
 H(T^*) =\argmax_{P \subset Q,|P|=k} H(P).
\end{equation*}
\end{theorem}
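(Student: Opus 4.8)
The plan is to reproduce the classical Nemhauser--Wolsey--Fisher analysis for greedy maximization of a monotone submodular function under a cardinality constraint, since Theorem~\ref{thm:MI} and Theorem~\ref{thm:SM} have already supplied exactly the two structural properties that argument needs. First I would normalize the objective by setting $g(S) \defeq H(S) - H(\emptyset)$, so that $g(\emptyset) = 0$; by Theorem~\ref{thm:MI} the function $g$ is monotonically nondecreasing, and by Theorem~\ref{thm:SM} it is submodular. Let $T_0 \subset T_1 \subset \cdots \subset T_k = T$ denote the partial solutions produced by Algorithm~\ref{alg:1}, with $T_0 = \emptyset$, and let $T^*$ be the optimal set with $|T^*| = k$. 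The whole proof reduces to establishing the per-round progress bound
\begin{equation*}
 g(T_{i+1}) - g(T_i) \;\geq\; \frac{1}{k}\bigl(g(T^*) - g(T_i)\bigr)
\end{equation*}
for each $i = 0, 1, \ldots, k-1$.

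To obtain this bound I would argue as follows. By monotonicity, $g(T^*) \leq g(T^* \cup T_i)$, and writing $g(T^* \cup T_i) - g(T_i)$ as a telescoping sum of single-element marginal gains over the elements of $T^* \setminus T_i$ (each added to an intermediate set containing $T_i$), submodularity lets me bound every such marginal gain by the corresponding gain taken relative to $T_i$ alone. This yields
\begin{equation*}
 g(T^*) - g(T_i) \;\leq\; \sum_{e \in T^* \setminus T_i} \bigl( g(T_i \cup \{e\}) - g(T_i) \bigr).
\end{equation*}
Because $T^* \subseteq Q$ and $|T^* \setminus T_i| \leq k$, every element of $T^* \setminus T_i$ is an eligible candidate in round $i+1$, so the gain of the greedy pick is at least the maximum — and hence at least the average — of the summands above, giving the progress bound.

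Finally I would convert the progress bound into the stated guarantee by a one-line recurrence. Setting $\delta_i \defeq g(T^*) - g(T_i)$, the bound rearranges to $\delta_{i+1} \leq (1 - 1/k)\,\delta_i$, so $\delta_k \leq (1 - 1/k)^k \delta_0 \leq e^{-1}\delta_0$; since $\delta_0 = g(T^*)$, this reads $g(T^*) - g(T) \leq e^{-1} g(T^*)$, i.e. $g(T) \geq (1 - 1/e)\,g(T^*)$, which is exactly $H(T) - H(\emptyset) \geq (1 - \tfrac{1}{e})\bigl(H(T^*) - H(\emptyset)\bigr)$. The main obstacle is the middle paragraph: correctly applying submodularity to bound the optimality gap $g(T^*) - g(T_i)$ by a sum of \emph{single-edge} marginal gains, together with the bookkeeping that guarantees all of $T^* \setminus T_i$ remains available to the greedy selection (which is ensured by $T^* \subseteq Q$). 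Everything after that is a routine geometric recurrence and the inequality $(1 - 1/k)^k \leq 1/e$.
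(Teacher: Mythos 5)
Your proof is correct and takes essentially the same approach as the paper: the paper establishes monotonicity (Theorem~\ref{thm:MI}) and submodularity (Theorem~\ref{thm:SM}) and then obtains this theorem by directly citing the Nemhauser--Wolsey--Fisher greedy guarantee, which is exactly the classical argument you have written out in full. Indeed, your telescoping bound and geometric recurrence are the same machinery the paper itself deploys explicitly when proving the $(1-\frac{1}{e}-\epsilon)$ guarantee for the fast algorithm (Theorem~\ref{thm:68}).
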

%
%
%
%

\section{Fast Greedy Algorithm}

Although the computation complexity of  Algorithm \ref{alg:1} is much lower than the brute-force algorithm, it  still cannot handle  large-scale networks since it requires inverting  matrix $\LL_F$ in cube time. As shown above, the key step to solve Problem \ref{prob:om} is to compute the quantity $\Delta(e)$ given in~\eqref{Eq:inc}, which is the increment of overall opinion resulted by the addition of edge $e$. In this section, we provide efficient approximations for the three terms $\boldsymbol{1}^\top \LL_F^{-1} \ee_i$, $1-\ee_i^\top \LL_F^{-1} \bb$, and $\ee_i^\top\LL_F^{-1}\ee_i$ in the numerator and denominator of~\eqref{Eq:inc}, which avoid inverting a matrix. These approximations together lead to an error-guaranteed approximation to $\Delta(e)$ and thus a fast approximation algorithm to Problem \ref{prob:om} that has $(1-\frac{1}{e}-\epsilon)$ approximation ratio and time complexity $\tilde{O}(km\epsilon^{-2})$ for any error parameter $0< \eps < 1/2$.

In order to reduce computational cost, it is important to avoid the inversion of matrix  $\LL_F$, which can be done since  matrix  $\LL_F$ has many desirable properties as will be shown below. For example,  it can be expressed as the  sum of a Laplacian matrix associated with a graph  and a nonnegative diagonal matrix. For  a connected graph $\calG=(V,E)$ with a subset $F\subset V$ of follower nodes and a subset $S \subset V$ of leader nodes,  we define a graph $\bar{\calG}=(F,(F\times F) \cap E)$ with $n-s$ nodes and  $\bar{m}$ edges, which is a subgraph of  $\calG$. Let $\bar{\BB}$ be the incidence matrix of $\bar{\calG}$ and $\bar{\LL}$ be its Laplacian matrix. Then, matrix $\LL_F$ can be represented as $\LL_F= \bar{\BB}^\top \bar{\BB} + \WW$, where $\WW$ is a nonnegative diagonal matrix with  the $i^{\rm th}$  diagonal entry equal to the number of edges connecting the follower node $i$ and nodes in $S$. Thus, $\LL_F$ is a symmetric, diagonally-dominant M-matrix (SDDM). Moreover, it is easy to obtain that $\ee_i^\top \LL_{F}^{-1} \ee_i \geq 1/n$ and $\trace{\LL_{F}^{-1}} \leq \trace{\bar{\LL}^{-1}}\leq(n^2-1)/6$~\cite{LoLo03}. These properties are helpful for the following derivations.


In addition, the following lemma is also instrumental for approximating the related quantities.
\begin{lemma}\label{lem:2norm}
For an arbitrary vector $\vvv$ of $n-s$ dimension,  $\vvv_i^2 \leq n \norm{\vvv}_{\LL_{F}}^2$ holds, where $\norm{\vvv}_{\LL_{F}}^2=\vvv^\top \LL_F \vvv$.
\end{lemma}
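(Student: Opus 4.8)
The plan is to dispense with the matrix inverse entirely and argue through the Dirichlet energy on an augmented graph. Recall from the excerpt that $\LL_F = \bBB^\top \bBB + \WW$, where $\bBB$ is the incidence matrix of the induced subgraph $\bar{\calG}$ on the followers and $\WW$ is the diagonal matrix whose $i$-th entry counts the edges from follower $i$ to the leader set $S$. Since $\WW$ has at least one positive entry and $\calG$ is connected, $\LL_F$ is a genuine (invertible) SDDM matrix. My first step is to merge all leaders into a single grounded node $g$, forming a weighted graph $\calG'$ on the node set $F \cup \{g\}$: keep every edge of $\bar{\calG}$ with conductance $1$, and join each follower $i$ to $g$ with an edge of conductance $\WW_{ii}$. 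A direct check shows that, for the extension $\tilde\vvv$ of $\vvv$ that assigns value $0$ to $g$, the Dirichlet energy of $\tilde\vvv$ on $\calG'$ equals $\vvv^\top \bBB^\top\bBB\vvv + \vvv^\top\WW\vvv = \norm{\vvv}_{\LL_F}^2$. This is the key bookkeeping step and the place where correctness must be verified carefully: the diagonal $\WW$ has to account exactly for the follower-to-ground conductances, and the choice $\tilde\vvv_g = 0$ is what makes the grounded energy coincide with $\norm{\vvv}_{\LL_F}^2$.

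With this reformulation in hand, I would bound a single coordinate by a telescoping argument along a path. Because $\calG$ is connected and every leader is reachable from the followers, $\calG'$ is connected on $(n-s)+1$ nodes, so there is a simple path $i = u_0, u_1, \ldots, u_\ell = g$ with $\ell \le n-s$ edges. Writing $\vvv_i = \tilde\vvv_i - \tilde\vvv_g = \sum_{t=1}^{\ell}(\tilde\vvv_{u_{t-1}} - \tilde\vvv_{u_t})$ and applying the Cauchy--Schwarz inequality gives
\[
\vvv_i^2 \le \ell \sum_{t=1}^{\ell}(\tilde\vvv_{u_{t-1}} - \tilde\vvv_{u_t})^2 .
\]
Since the edges of a simple path are distinct and every conductance in $\calG'$ is at least $1$, the right-hand sum is at most the full Dirichlet energy, i.e. $\sum_{t=1}^{\ell}(\tilde\vvv_{u_{t-1}} - \tilde\vvv_{u_t})^2 \le \norm{\vvv}_{\LL_F}^2$. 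Combining with $\ell \le n-s \le n$ yields $\vvv_i^2 \le (n-s)\norm{\vvv}_{\LL_F}^2 \le n\norm{\vvv}_{\LL_F}^2$, as claimed.

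An equivalent, more algebraic route frames the same bound through the diagonal of the inverse: by the generalized Cauchy--Schwarz inequality for the positive definite matrix $\LL_F$, one has $\vvv_i^2 = (\ee_i^\top\vvv)^2 \le (\ee_i^\top\LL_F^{-1}\ee_i)\,\norm{\vvv}_{\LL_F}^2$, so it suffices to show $\ee_i^\top\LL_F^{-1}\ee_i \le n$. The grounded-Laplacian identity identifies this diagonal entry with the effective resistance $\er(i,g)$ in $\calG'$, and $\er(i,g)$ is at most the resistance of any single path from $i$ to $g$, hence at most $\ell \le n-s$. Either way the crux is the same: once the problem is transported to the connected augmented graph $\calG'$, the only quantitative input needed is that an effective resistance (equivalently, a path length) in a graph on at most $n$ nodes with unit-or-larger conductances cannot exceed $n$. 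I expect the main obstacle to be purely the setup---verifying the energy identity and that all conductances are at least $1$---after which the path plus Cauchy--Schwarz estimate is routine.
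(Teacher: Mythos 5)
Your proof is correct and is essentially the paper's own argument in different packaging: the paper likewise telescopes $\vvv_i$ along a simple path in $\bar{\calG}$ from $i$ to a follower $j$ with $\WW_{j,j}\geq 1$ and applies Cauchy--Schwarz, with the energy term $\vvv_j^2$ contributed by $\WW$ playing exactly the role of your final edge into the grounded node $g$ (value $0$). Your grounded-graph formulation simply unifies the paper's two cases ($\WW_{i,i}\geq 1$ and $\WW_{i,i}=0$) into a single path argument, and your effective-resistance variant via $\vvv_i^2 \leq (\ee_i^\top\LL_F^{-1}\ee_i)\,\norm{\vvv}_{\LL_F}^2$ is an equivalent restatement of the same estimate.
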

\begin{proof}
Considering $\LL_F= \bar{\BB}^\top \bar{\BB} + \WW$, we distinguish two cases:  $\WW_{i,i}\geq 1$ and $\WW_{i,i}= 0$. For the first case $\WW_{i,i}\geq 1$,  it is apparent that $\vvv_i^2 \leq n||\vvv||_{\LL_F}^2$. While for the second case $\WW_{i,i} = 0$,  there exists a follower node $j$ with corresponding  element $\WW_{j,j} \geq 1$ in the component of graph $\bar{\calG}$ that contains  node $j$. Let $P_{ij}$ be a simple path connecting node $i$ and $j$ in  graph $\bar{\calG}$, the length of  which  is at most $n$. Then, we have
\begin{align*}
  ||\vvv||_{\LL_F}^2 &\geq \sum_{(a,b) \in P_{ij}} (\vvv_a-\vvv_b)^2 +\vvv_j^2 \\
  & \geq \frac{(\sum_{(a,b) \in P_{ij}}(\vvv_a-\vvv_b)+\vvv_j)^2}{n} \geq \frac{\vvv_i^2}{n},
\end{align*}
which completes the proof.
\end{proof}

\subsection{Approximations of  Numerator in~\eqref{Eq:inc} }%



We now approximate the two terms $\one^\top \LL_F^{-1} \ee_i$, $1-\ee_i^\top \LL_F^{-1} \bb$ in the numerator in~\eqref{Eq:inc}. Since $\LL_F$ is a SDDM matrix, one can resort to the fast symmetric, diagonally-dominant (SDD) linear system solver~\cite{SpTe14,CoKyMiPaJaPeRaXu14}  to evaluate $\boldsymbol{1}^\top \LL_F^{-1} \ee_i$ and $1-\ee_i^\top \LL_F^{-1} \bb$, which  avoids inverting matrix $\LL_F$.
\begin{lemma}\label{lem:solver}
There is a nearly linear time solver $\xx = \SDDMSolver(\SS, \yy, \eps)$ which takes a symmetric positive semi-definite matrix $\SS_{n\times n}$ with $m$ nonzero entries, a vector $\bb \in \mathbb{R}^n$, and an error parameter $\delta > 0$, and returns a vector $\xx \in \mathbb{R}^n$ satisfying $\norm{\xx - \SS^{-1} \yy}_{\SS} \leq \delta \norm{\SS^{-1} \yy}_{\SS}$ with high probability, where $\norm{\xx}_{\SS} \defeq \sqrt{\xx^\top \SS \xx}$. The solver runs in expected time $\tilde{O}(m)$, where $\tilde{O}(\cdot)$ notation suppresses the ${\rm poly} (\log n)$ factors. 
\end{lemma}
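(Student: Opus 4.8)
The statement of Lemma~\ref{lem:solver} is a restatement of the now-classical nearly-linear-time solvers for symmetric diagonally-dominant systems, so the plan is to reduce the problem to this established machinery rather than to build a solver from scratch. First I would observe that an SDDM matrix $\SS$ is, up to the standard Gremban-style reduction, equivalent to a graph Laplacian: one embeds the linear system $\SS\xx=\yy$ into a slightly larger Laplacian system whose solution projects back onto the desired $\xx$, at the cost of only a constant-factor increase in the number of nonzeros $m$. This lets me work entirely with Laplacians, for which the convergence and timing guarantees of the cited works are stated, and then transport the result back to the SDDM matrix $\LL_F$ arising in our setting.

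The core of the construction is a preconditioner chain. I would build a sequence of graphs $\calG = \calG_0, \calG_1, \ldots, \calG_\ell$ whose Laplacians $\LL_0, \LL_1, \ldots$ are spectrally close at consecutive levels, each $\calG_{t+1}$ obtained from $\calG_t$ by a combination of spectral sparsification and partial elimination so that the edge count decays geometrically and $\ell = O(\log n)$. Each $\LL_{t+1}$ then serves as a preconditioner for $\LL_t$ inside a preconditioned iterative method (preconditioned Chebyshev iteration, or conjugate gradient), so that a single approximate solve at level $t$ reduces to a polylogarithmic number of matrix-vector multiplications against $\LL_t$ together with one recursive solve at level $t+1$.

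The error and timing analyses then proceed in tandem. Because consecutive preconditioners are spectrally bounded relative to one another by a constant factor, the relative condition number at each level is $O(1)$, so $O(\log(1/\delta))$ iterations per level suffice to drive the residual below the target; here the output guarantee $\norm{\xx - \SS^{-1}\yy}_{\SS} \le \delta \norm{\SS^{-1}\yy}_{\SS}$ is exactly the contraction statement one obtains by analyzing preconditioned iterations in the energy norm $\norm{\cdot}_{\SS}$. Since each matrix-vector product costs time proportional to the number of nonzeros at that level, and the level sizes decay geometrically, summing the per-level cost over the $O(\log n)$ levels yields a total expected running time of $\Otil(m)$, with the $\mathrm{poly}(\log n)$ factors absorbed into the $\Otil$ notation.

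The hard part will be the spectral-sparsification step underlying the chain: one must produce, in nearly-linear time and with high probability, a sparsifier of each $\LL_t$ that preserves the quadratic form up to a constant factor while shrinking the edge count. This is precisely where the randomness and the ``with high probability'' qualifier enter, and it is the technically heaviest ingredient; the recursive preconditioning and the energy-norm convergence bound are, by comparison, routine once the chain is in hand. As all of these components are established in~\cite{SpTe14,CoKyMiPaJaPeRaXu14}, I would present the lemma as a direct consequence of those results, specialized to the SDDM matrix $\LL_F$ of our problem.
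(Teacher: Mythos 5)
Your proposal is correct and takes essentially the same approach as the paper: the paper offers no proof of this lemma at all, stating it as a black-box import of the nearly-linear-time SDD/SDDM solvers of~\cite{SpTe14,CoKyMiPaJaPeRaXu14}, which is exactly how you conclude. Your sketch of the internal machinery (Gremban-style reduction, preconditioner chains built from sparsification and partial elimination, energy-norm convergence of preconditioned iterations) is a faithful summary of those cited works, but none of it is reproduced or needed in the paper itself.
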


Based on this solver, $\boldsymbol{1}^\top \LL_F^{-1} \ee_i$, $1-\ee_i^\top \LL_F^{-1} \bb$ are approximated in Lemmas~\ref{lem:num1} and~\ref{lem:num2}, respectively.
\begin{lemma}\label{lem:num1}
Given an undirected unweighted graph $\calG=(V,E)$, the matrix $\LL_{F}$ and a parameter $0<\epsilon < 1/2$, let $\hh = \SDDMSolver(\LL_F, \boldsymbol{1}, \delta_1)$ where $\delta_1 = \frac{\epsilon}{2n^2\sqrt{6(n^2-1)}}$. Then for any $i \in F$, we have
\begin{equation}
\boldsymbol{1}^\top \LL_F^{-1} \ee_i \mathop{\approx}\limits^{\epsilon/6} \hh_i.
\end{equation}
\end{lemma}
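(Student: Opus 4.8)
The plan is to exploit the symmetry of $\LL_F^{-1}$ to turn the bilinear quantity $\boldsymbol{1}^\top \LL_F^{-1}\ee_i$ into a single coordinate of the exact solution vector, and then to control the solver's coordinate-wise error through the energy-norm guarantee of Lemma~\ref{lem:solver} combined with Lemma~\ref{lem:2norm}. First I would observe that $\boldsymbol{1}^\top \LL_F^{-1}\ee_i = (\LL_F^{-1}\boldsymbol{1})_i$ by symmetry of $\LL_F^{-1}$, so the target quantity is exactly the $i^{\rm th}$ entry of $\LL_F^{-1}\boldsymbol{1}$, which is precisely what $\hh$ approximates. Setting $\vvv = \hh - \LL_F^{-1}\boldsymbol{1}$, the solver guarantee gives $\norm{\vvv}_{\LL_F} \le \delta_1 \norm{\LL_F^{-1}\boldsymbol{1}}_{\LL_F}$, and the whole proof reduces to showing that the single coordinate $\vvv_i = \hh_i - \boldsymbol{1}^\top\LL_F^{-1}\ee_i$ is small relative to the true value.

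Next I would convert this energy-norm error into an absolute bound on $\vvv_i$. Lemma~\ref{lem:2norm} yields $\vvv_i^2 \le n\norm{\vvv}_{\LL_F}^2$, hence $|\vvv_i| \le \sqrt{n}\,\delta_1\,\norm{\LL_F^{-1}\boldsymbol{1}}_{\LL_F}$. The energy norm of the exact solution I would bound by $\norm{\LL_F^{-1}\boldsymbol{1}}_{\LL_F}^2 = \boldsymbol{1}^\top \LL_F^{-1}\boldsymbol{1} \le n\,\lambda_{\max}(\LL_F^{-1}) \le n\,\trace{\LL_F^{-1}} \le n(n^2-1)/6$, where the last step uses the stated trace bound $\trace{\LL_F^{-1}} \le (n^2-1)/6$. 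Substituting $\delta_1 = \frac{\epsilon}{2n^2\sqrt{6(n^2-1)}}$ and simplifying (the factor $\sqrt{6(n^2-1)}\cdot\sqrt{6} = 6\sqrt{n^2-1}$ cancels against the numerator) collapses all constants to the clean estimate $|\vvv_i| \le \frac{\epsilon}{12n}$.

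Finally I would upgrade this absolute bound to the claimed relative $(\epsilon/6)$-approximation, which requires a lower bound on the true value. Since every entry of $\LL_F^{-1}$ is nonnegative and $\ee_i^\top\LL_F^{-1}\ee_i \ge 1/n$, we have $\boldsymbol{1}^\top\LL_F^{-1}\ee_i = \sum_j (\LL_F^{-1})_{j,i} \ge (\LL_F^{-1})_{i,i} \ge 1/n$. Therefore $(\epsilon/6)\,\boldsymbol{1}^\top\LL_F^{-1}\ee_i \ge \frac{\epsilon}{6n} \ge \frac{\epsilon}{12n} \ge |\vvv_i|$, which rearranges to $(1-\epsilon/6)\,\boldsymbol{1}^\top\LL_F^{-1}\ee_i \le \hh_i \le (1+\epsilon/6)\,\boldsymbol{1}^\top\LL_F^{-1}\ee_i$, i.e. $\boldsymbol{1}^\top\LL_F^{-1}\ee_i \mathop{\approx}\limits^{\epsilon/6} \hh_i$.

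I expect the main obstacle to be the middle paragraph: correctly bounding $\norm{\LL_F^{-1}\boldsymbol{1}}_{\LL_F}$ via the trace inequality and verifying that the prescribed $\delta_1$ produces \emph{exactly} $\frac{\epsilon}{12n}$, so that the final comparison against the $1/n$ lower bound on the true value carries the crucial factor-of-two slack. The symmetry observation and the two applications of the stated lemmas are otherwise routine bookkeeping.
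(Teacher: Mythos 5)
Your proof is correct and follows essentially the same route as the paper's: the symmetry observation $\boldsymbol{1}^\top\LL_F^{-1}\ee_i=(\LL_F^{-1}\boldsymbol{1})_i$, the chain of Lemma~\ref{lem:2norm} and the solver guarantee of Lemma~\ref{lem:solver}, the bound $\boldsymbol{1}^\top\LL_F^{-1}\boldsymbol{1}\le n\,\trace{\LL_F^{-1}}\le n(n^2-1)/6$, and the diagonal-entry lower bound on the true value. The only cosmetic difference is that the paper uses the weaker lower bound $\tilde{\hh}_i\ge 1/(2n)$ to land exactly on $\epsilon/6$, while your use of $1/n$ yields the slightly stronger $\epsilon/12$, which of course still implies the claim.
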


\begin{proof}
Define $\tilde{\hh} =  \LL_F^{-1}\boldsymbol{1}$, then $\tilde{\hh}_i = \boldsymbol{1}^\top \LL_F^{-1} \ee_i$. According to Lemmas~\ref{lem:2norm} and~\ref{lem:solver}, one obtains
\begin{align*}
    (\hh_i-\tilde{\hh}_i)^2 \leq&  n \smallnorm{\hh-\tilde{\hh}}_{\LL_F}^2
\leq \delta_1^2 n \smallnorm{\tilde{\hh}}^2_{\LL_F}  \\
\leq & \delta_1^2 n^2 \trace{\LL_F^{-1}} \leq \delta_1^2 n^2(n^2-1)/6.
\end{align*}
On the other hand, $\tilde{\hh}_i$ can be bounded as
\begin{equation*}
\tilde{\hh}_i =  \boldsymbol{1}^\top \LL_F^{-1} \ee_i \geq \ee_i^\top \LL_F^{-1} \ee_i  \geq \frac{1}{2n}.
\end{equation*}
Then, one has
\begin{equation*}
  \frac{|\hh_i-\tilde{\hh}_i|}{\tilde{\hh}_i} \leq 2\delta_1 n^2 \sqrt{(n^2-1)/6} \leq \frac{\epsilon}{6},
\end{equation*}
which completes the proof.
\end{proof}
\begin{lemma}\label{lem:num2}
Given an undirected unweighted graph $\calG=(V,E)$, the matrix $\LL_{F}$ and two parameters $0<\epsilon < 1/2$, $1/2 < \eta <1$, let $\pp = \SDDMSolver(\LL_F, \bb, \delta_2)$ where $\delta_2=\frac{(1-\eta)\epsilon}{n^2\sqrt{6(n^2-1)}}$. Then for any $i \in F$, we have
\begin{equation}
1-\ee_i^\top \LL_F^{-1} \bb \mathop{\approx}\limits^{\epsilon/6} 1-\pp_i.
\end{equation}
\end{lemma}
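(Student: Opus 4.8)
The plan is to mirror the proof of Lemma~\ref{lem:num1}, with the all-ones vector replaced by $\bb$ and the trivial coordinate lower bound replaced by the candidate-set threshold $\eta$. First I would set $\tilde{\pp} \defeq \LL_F^{-1}\bb$, so that $\tilde{\pp}_i = \ee_i^\top\LL_F^{-1}\bb$ is precisely the stationary opinion $\xx_i(\infty)$ occurring in the quantity to be approximated, while $\pp$ is the corresponding output of the solver. Writing $a = 1-\pp_i$ and $b = 1-\tilde{\pp}_i$, the claim $a\mathop{\approx}\limits^{\epsilon/6}b$ is equivalent to $|a-b| = |\tilde{\pp}_i-\pp_i| \le \tfrac{\epsilon}{6}(1-\tilde{\pp}_i)$, so the argument splits into an upper bound on the absolute error $|\tilde{\pp}_i-\pp_i|$ and a positive lower bound on the denominator $1-\tilde{\pp}_i$.

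For the absolute error I would apply Lemma~\ref{lem:2norm} to pass from a single coordinate to the $\LL_F$-norm, $(\pp_i-\tilde{\pp}_i)^2 \le n\,\smallnorm{\pp-\tilde{\pp}}_{\LL_F}^2$, and then Lemma~\ref{lem:solver} to get $\smallnorm{\pp-\tilde{\pp}}_{\LL_F}^2 \le \delta_2^2\,\smallnorm{\tilde{\pp}}_{\LL_F}^2 = \delta_2^2\,\bb^\top\LL_F^{-1}\bb$. The quadratic form is then controlled by $\bb^\top\LL_F^{-1}\bb \le \lambda_{\max}(\LL_F^{-1})\,\norm{\bb}^2 \le \trace{\LL_F^{-1}}\,\norm{\bb}^2 \le \tfrac{n^2-1}{6}\,n^3$, where I use the stated bound $\trace{\LL_F^{-1}}\le (n^2-1)/6$ together with the elementary estimate $\norm{\bb}^2\le n^3$ (each entry $\bb_i$ counts edges from follower $i$ to $S_1$, so $\bb_i < n$, and there are fewer than $n$ followers). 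Substituting $\delta_2 = \frac{(1-\eta)\epsilon}{n^2\sqrt{6(n^2-1)}}$ makes the powers of $n$ cancel and yields $(\pp_i-\tilde{\pp}_i)^2 \le \delta_2^2\,\tfrac{n^4(n^2-1)}{6} = \tfrac{(1-\eta)^2\epsilon^2}{36}$, i.e.\ $|\tilde{\pp}_i-\pp_i| \le \tfrac{(1-\eta)\epsilon}{6}$.

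It remains to bound the denominator from below, and this is where the factor $(1-\eta)$ in $\delta_2$ pays off. In contrast to Lemma~\ref{lem:num1}, there is no purely matrix-theoretic lower bound here, since $1-\tilde{\pp}_i = 1-\xx_i(\infty)$ can be arbitrarily small in general. Instead I would invoke the rule defining the candidate set $Q$, which only includes edges incident to followers with $\xx_i(\infty) < \eta$; for such an $i$ one has $\tilde{\pp}_i = \ee_i^\top\LL_F^{-1}\bb < \eta$ and hence $1-\tilde{\pp}_i > 1-\eta > 0$. Combining the two estimates gives $\frac{|\tilde{\pp}_i-\pp_i|}{1-\tilde{\pp}_i} \le \frac{(1-\eta)\epsilon/6}{1-\eta} = \frac{\epsilon}{6}$, which is exactly $|a-b|\le\tfrac{\epsilon}{6}b$ and therefore $1-\pp_i \mathop{\approx}\limits^{\epsilon/6} 1-\ee_i^\top\LL_F^{-1}\bb$.

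I expect the main obstacle to be the bound on $\bb^\top\LL_F^{-1}\bb$: one must keep the dependence on $\norm{\bb}^2$ tight enough (via $\bb_i<n$ and $|F|<n$) that the powers of $n$ cancel cleanly against $\delta_2$, rather than leaving a residual factor that would inflate the error beyond $\epsilon/6$. The only genuinely new conceptual point, relative to the previous lemma, is that the denominator cannot be bounded by linear algebra and must instead be supplied by the $\eta$-threshold built into $Q$; together these two observations account for the precise shape of $\delta_2$.
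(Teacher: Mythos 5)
Your proof is correct and follows essentially the same route as the paper: define $\tilde{\pp}=\LL_F^{-1}\bb$, bound the coordinate error via Lemma~\ref{lem:2norm} and the solver guarantee of Lemma~\ref{lem:solver}, and divide by the lower bound $1-\tilde{\pp}_i\geq 1-\eta$ supplied by the $\eta$-threshold defining $Q$. The only difference is one of detail: the paper states the intermediate bound $|\tilde{\pp}_i-\pp_i|\leq \delta_2 n^2\sqrt{(n^2-1)/6}$ without justification, while you explicitly supply the missing estimate $\bb^\top\LL_F^{-1}\bb\leq \norm{\bb}^2\trace{\LL_F^{-1}}\leq n^3(n^2-1)/6$, which is exactly what makes the paper's stated bound come out.
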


\begin{proof}
Define $\tilde{\pp} = \LL_F^{-1}\bb$, then $\tilde{\pp}_i = \ee_i^\top \LL_F^{-1} \bb$. Considering  the restriction of the set $Q$ of candidate edges, for each  follower $i$ connecting by an $e \in Q$,  $\ee_i^\top\LL_F^{-1}\bb \leq \eta$ holds. By Lemmas~\ref{lem:2norm} and~\ref{lem:solver}, we have
\begin{equation*}
|(1-\tilde{\pp}_i)-(1-\pp_i)|=|\tilde{\pp}_i - \pp_i| \leq  \delta_2 n^2\sqrt{(n^2-1)/6}.
\end{equation*}
Therefore,
\begin{equation*}
  \frac{|(1-\pp_i)-(1-\tilde{\pp}_i)|}{1-\tilde{\pp}_i} \leq \frac{\delta_2 n^2\sqrt{(n^2-1)/6}}{1-\eta}  \leq \frac{\epsilon}{6},
\end{equation*}
which finishes the proof.
\end{proof}

\subsection{Approximation of  Denominator~\eqref{Eq:inc}}


Using the expression $\LL_F =\bar{\BB}^\top\bar{\BB}+\WW$, $ \ee_i^\top \LL_{F}^{-1} \ee_i$ is  recast as
\begin{align}
  & \ee_i^\top \LL_{F}^{-1} \ee_i
  =  \ee_i^\top \LL_{F}^{-1}(\bar{\BB}^\top\bar{\BB} + \WW)\LL_{F}^{-1} \ee_i \notag\\
  = & \ee_i^\top \LL_{F}^{-1}\bar{\BB}^\top\bar{\BB}\LL_{F}^{-1} \ee_i + \ee_i^\top \LL_{F}^{-1}\WW\LL_{F}^{-1} \ee_i \notag\\
  = & \smallnorm{\bar{\BB} \LL_{F}^{-1} \ee_i}^2 + \smallnorm{\WW^{\frac{1}{2}} \LL_{F}^{-1} \ee_i}^2.
\end{align}
In this way, we have reduced the estimation of $\ee_i^\top \LL_{F}^{-1} \ee_i$ to the calculation of the $\ell_2$ norms $\smallnorm{\bar{\BB} \LL_F^{-1} \ee_i}^2$ and $\smallnorm{\WW^{\frac{1}{2}} \LL_F^{-1} \ee_i}^2$ of two vectors in $\mathbb{R}^{\bar{m}}$ and $\mathbb{R}^{n-s}$, respectively. However, the complexity for exactly computing these two $\ell_2$ norms is still high. In order to alleviate the computation burden,   we apply the Johnson-Lindenstrauss (JL) Lemma~\cite{JoLi84, Ac01} to approximate the $\ell_2$ norms. For example, for $\smallnorm{\bar{\BB} \LL_F^{-1} \ee_i}^2$, if we project a set of $n-s$ vectors of $\bar{m}$ dimension  (like the columns of matrix $\bar{\BB} \LL_F^{-1}$) onto a low $t$-dimensional subspace spanned by the columns of a random matrix $\QQ \in \mathbb{R}^{t\times \bar{m}}$ with entries being $\pm 1/\sqrt{t}$, where $t=\ceil{24\log (n-s)/ \eps^2}$ for given $\epsilon$, then we  get an $\epsilon$-approximation of $\norm{\bar{\BB} \LL_F^{-1} \ee_i}^2$ with high probability.  This $\ell_2$ norm preserving projection method is also applicable to the estimation of $\smallnorm{\WW^{\frac{1}{2}} \LL_F^{-1} \ee_i}^2$. For consistency, we introduce the JL Lemma~\cite{JoLi84, Ac01}.



\begin{lemma}
\label{lemma:JL}
	Given fixed vectors $\vvv_1,\vvv_2,\ldots,\vvv_n\in \mathbb{R}^d$ and
	$\epsilon>0$, let
 $\QQ_{t\times d}$, $t\ge 24\log n/\epsilon^2$, be a matrix,    each entry of which is equal to $1/\sqrt{t}$ or $-1/\sqrt{t}$ with the same probability $1/2$.
	 Then with probability at least $1-1/n$,
	\[(1-\epsilon)\|\vvv_i-\vvv_j\|^2\le \|\QQ \vvv_i-\QQ \vvv_j\|^2\le
	(1+\epsilon)\|\vvv_i-\vvv_j\|^2\] for all pairs $i,j\le n$.
\end{lemma}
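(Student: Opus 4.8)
The plan is to reduce the two-sided, all-pairs statement to a single-vector norm-preservation bound and then close with a union bound. First I would use linearity of $\QQ$: since $\QQ\vvv_i - \QQ\vvv_j = \QQ(\vvv_i-\vvv_j)$, the desired inequality for the pair $(i,j)$ says precisely that $\QQ$ nearly preserves the squared length of the fixed difference vector $\ww \defeq \vvv_i - \vvv_j$. Hence it suffices to prove that for any fixed $\ww\in\mathbb{R}^d$ the random variable $\norm{\QQ\ww}^2$ concentrates tightly around $\norm{\ww}^2$, and then to union-bound over the $\binom{n}{2}$ difference vectors. The expectation is exact and easy: writing $(\QQ\ww)_k = \sum_{l=1}^d \QQ_{k,l}\ww_l$, independence of the entries together with $\mathbb{E}[\QQ_{k,l}]=0$ and $\mathbb{E}[\QQ_{k,l}^2]=1/t$ gives $\mathbb{E}[(\QQ\ww)_k^2] = \norm{\ww}^2/t$, so that $\mathbb{E}[\norm{\QQ\ww}^2] = \norm{\ww}^2$.

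The heart of the argument---and the step I expect to be the main obstacle---is a Chernoff-type concentration bound for $\norm{\QQ\ww}^2 = \sum_{k=1}^t (\QQ\ww)_k^2$. Normalizing to $\norm{\ww}=1$, the rows are independent and each $(\QQ\ww)_k$ is a weighted Rademacher sum. I would bound the upper tail by applying Markov's inequality to $\exp(\lambda\norm{\QQ\ww}^2)$ for a parameter $\lambda>0$ and factoring the resulting moment generating function across the independent rows. The delicate point is to show that the moment generating function of the Rademacher quadratic form is dominated by that of the corresponding Gaussian form; this comparison, which is the substance of Achlioptas's analysis~\cite{Ac01}, lets the $\pm 1/\sqrt{t}$ projection inherit the same sub-exponential tail estimate
\[
\Pr\!\left[\,\norm{\QQ\ww}^2 > (1+\epsilon)\norm{\ww}^2\,\right] \le \exp\!\left(-\frac{t}{2}\Big(\frac{\epsilon^2}{2}-\frac{\epsilon^3}{3}\Big)\right)
\]
enjoyed by Gaussian random projection, with an identical bound for the lower tail. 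Optimizing over $\lambda$ and expanding the logarithm of the moment generating function to third order in $\epsilon$ produces this estimate.

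Finally I would assemble the pieces with the prescribed dimension $t = \ceil{24\log n/\epsilon^2}$. For $0<\epsilon<1/2$ one checks $\frac{\epsilon^2}{2}-\frac{\epsilon^3}{3} = \frac{\epsilon^2}{3} + \frac{\epsilon^2}{6}(1-2\epsilon) \ge \frac{\epsilon^2}{3}$, so each tail fails with probability at most $\exp\!\big(-\tfrac{t}{2}\cdot\tfrac{\epsilon^2}{3}\big) = \exp(-4\log n) = n^{-4}$, and thus any single pair violates the two-sided bound with probability at most $2n^{-4}$. A union bound over the $\binom{n}{2} < n^2/2$ difference vectors $\vvv_i-\vvv_j$ gives total failure probability at most $\tfrac{n^2}{2}\cdot 2n^{-4} = n^{-2} \le 1/n$, which is exactly the claimed success probability $1-1/n$. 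Once the Gaussian domination of the Rademacher moment generating function is established, both the expectation computation and the union bound are routine, so that single concentration estimate is where essentially all the work lies.
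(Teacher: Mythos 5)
The paper does not actually prove this lemma: it is quoted as a known tool, with the proof delegated entirely to the cited references~\cite{JoLi84,Ac01}. Your proposal reconstructs precisely the argument of the second of these (Achlioptas): reduce the all-pairs statement to norm preservation of a single fixed vector $\ww=\vvv_i-\vvv_j$ by linearity of $\QQ$, compute $\mathbb{E}\,\norm{\QQ\ww}^2=\norm{\ww}^2$ exactly, establish the sub-exponential tail $\exp\bigl(-\tfrac{t}{2}(\epsilon^2/2-\epsilon^3/3)\bigr)$ for the Rademacher projection by dominating its moment generating function with the Gaussian one, and finish with a union bound over the $\binom{n}{2}$ difference vectors. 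Your arithmetic closes correctly: for $0<\epsilon\le 1/2$ one has $\epsilon^2/2-\epsilon^3/3\ge\epsilon^2/3$, so with $t\ge 24\log n/\epsilon^2$ each tail fails with probability at most $n^{-4}$ and the union bound yields total failure probability at most $n^{-2}\le 1/n$. Two caveats. First, the heart of the argument---the Gaussian domination of the Rademacher moment generating function---is asserted by citation rather than proved, so your write-up is a proof skeleton whose hardest step remains outsourced; this is, however, no less than what the paper itself does, since it outsources the entire lemma. Second, the lemma as stated allows any $\epsilon>0$, whereas your key inequality $\epsilon^2/2-\epsilon^3/3\ge\epsilon^2/3$ requires $\epsilon\le 1/2$; for $\epsilon$ near $1$ the constant $24$ combined with your union bound only gives failure probability bounded by a constant, not $1/n$. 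This restriction is harmless here, since the paper invokes the lemma with parameter $\epsilon/12<1/24$, but it is worth flagging that the statement's unrestricted ``$\epsilon>0$'' is looser than what this (standard) proof delivers.
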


Let $\QQ_{t \times \bar{m}}$ and $\PP_{t \times (n-s)}$ be two random $\pm 1/\sqrt{t}$ matrices where $t = \ceil{24\log (n-s)/ \eps^2}$.  By Lemma~\ref{lemma:JL}, for any $ i\in F$ we have
	\begin{align*}
	& \ee_i^\top \LL_F^{-1} \ee_i
	\mathop{\approx}\limits^{\epsilon}
	\smallnorm{\QQ \bar{\BB} \LL_F^{-1} \ee_i}^2 +
	\smallnorm{\PP \WW^{\frac{1}{2}} \LL_F^{-1} \ee_i}^2.
	\end{align*}
Let $\bar{\XX}=\QQ \bBB$, $\XX=\bBB \LL_{F}^{-1}$, $\XX'=\QQ\bBB \LL_{F}^{-1}$, $\tilde{\XX}_{[j, :]} = \SDDMSolver$ $(\LL_{F}, \bar{\XX}_{[j, :]}, \delta)$ and $\bar{\YY} = \PP \WW^{\frac{1}{2}}$, $\YY=\WW^{\frac{1}{2}}\LL_{F}^{-1}$, $\YY'=\PP \WW^{\frac{1}{2}}\LL_{F}^{-1}$, $\tilde{\YY}_{[j, :]} = \SDDMSolver(\LL_{F}, \bar{\YY}_{[j, :]},$ $ \delta)$. Then, $\ee_i^\top \LL_F^{-1} \ee_i=\smallnorm{\XX\ee_i}^2+\smallnorm{\YY\ee_i}^2$. Combining Lemmas~\ref{lem:solver} and~\ref{lemma:JL}, we get an approximation of $\ee_i^\top \LL_F^{-1} \ee_i$.

\begin{lemma}\label{lem:dom}
Suppose that $\forall i \in F$,
\begin{align}
   & (1-\epsilon/12)\smallnorm{\XX\ee_i}^2 \leq \smallnorm{\XX'\ee_i}^2 \leq  (1+\epsilon/12)\smallnorm{\XX\ee_i}^2, \notag\\
   & (1-\epsilon/12)\smallnorm{\YY\ee_i}^2 \leq \smallnorm{\YY'\ee_i}^2 \leq  (1+\epsilon/12)\smallnorm{\YY\ee_i}^2,\notag
\end{align}
and $\forall j, 1\leq j \leq t$,
\begin{align}
   & \smallnorm{\XX'_{[j, :]}-\tilde{\XX}_{[j, :]}}_{\LL_{F}} \leq \delta \smallnorm{\XX'_{[j, :]}}_{\LL_{F}}, \notag\\
   & \smallnorm{\YY'_{[j, :]}-\tilde{\YY}_{[j, :]}}_{\LL_{F}} \leq \delta \smallnorm{\YY'_{[j, :]}}_{\LL_{F}},\notag
\end{align}
where $\delta< \frac{\epsilon}{72n^2}\sqrt{\frac{6(1-\epsilon/12)}{(1+\epsilon/12)(n^2-1)}}$, we then have
\begin{equation}\label{eiLLei}
e_i^\top \LL_F^{-1} \ee_i= \smallnorm{\XX\ee_i}^2 + \smallnorm{\YY\ee_i}^2 \mathop{\approx}\limits^{\epsilon/3} \smallnorm{\tilde{\XX}\ee_i}^2+\smallnorm{\tilde{\YY}\ee_i}^2.
\end{equation}
\end{lemma}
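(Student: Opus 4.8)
The plan is to track the quantity $\ee_i^\top\LL_F^{-1}\ee_i = \smallnorm{\XX\ee_i}^2 + \smallnorm{\YY\ee_i}^2$ through its two successive approximations: the Johnson--Lindenstrauss projection (which replaces $\XX,\YY$ by $\XX',\YY'$) and the SDDM solver (which replaces $\XX',\YY'$ by the computable $\tilde{\XX},\tilde{\YY}$). First I would dispatch the projection step, which is the easy half: the two displayed hypotheses are exactly $\smallnorm{\XX'\ee_i}^2 \mathop{\approx}\limits^{\epsilon/12} \smallnorm{\XX\ee_i}^2$ and $\smallnorm{\YY'\ee_i}^2 \mathop{\approx}\limits^{\epsilon/12} \smallnorm{\YY\ee_i}^2$, so adding them gives $\smallnorm{\XX'\ee_i}^2 + \smallnorm{\YY'\ee_i}^2 \mathop{\approx}\limits^{\epsilon/12} \ee_i^\top\LL_F^{-1}\ee_i$.

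The substance is the solver step, where the guarantee is stated row-by-row in the $\LL_F$-energy norm but the target $\smallnorm{\tilde{\XX}\ee_i}^2$ is the squared Euclidean norm of a \emph{column}. To bridge these, for each row index $j$ I would apply Lemma~\ref{lem:2norm} to the vector $\XX'_{[j,:]} - \tilde{\XX}_{[j,:]}$, turning the energy-norm bound into the coordinate bound $(\XX'_{[j,i]} - \tilde{\XX}_{[j,i]})^2 \leq n\smallnorm{\XX'_{[j,:]} - \tilde{\XX}_{[j,:]}}_{\LL_F}^2 \leq n\delta^2\smallnorm{\XX'_{[j,:]}}_{\LL_F}^2$. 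Summing over $j$ yields $\smallnorm{(\XX'-\tilde{\XX})\ee_i}^2 \leq n\delta^2\sum_j\smallnorm{\XX'_{[j,:]}}_{\LL_F}^2$. I would then bound the row-energy sum by a trace: since $\XX'_{[j,:]} = \bar{\XX}_{[j,:]}\LL_F^{-1}$, one gets $\sum_j\smallnorm{\XX'_{[j,:]}}_{\LL_F}^2 = \trace{\LL_F^{-1}\bar{\XX}^\top\bar{\XX}}$, and using that the projection nearly preserves this quadratic form together with $\bBB^\top\bBB = \bar{\LL}\preceq\LL_F$ and $\trace{\LL_F^{-1}\bar{\LL}}\leq n-s$, this is of order $n$.

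With the column error $\smallnorm{(\XX'-\tilde{\XX})\ee_i}$ controlled, I would convert it into an error on the squared norm via $\bigl|\smallnorm{\tilde{\XX}\ee_i}^2 - \smallnorm{\XX'\ee_i}^2\bigr| \leq \smallnorm{(\XX'-\tilde{\XX})\ee_i}\bigl(2\smallnorm{\XX'\ee_i} + \smallnorm{(\XX'-\tilde{\XX})\ee_i}\bigr)$, bounding the leading factor by $\smallnorm{\XX'\ee_i}^2 \leq (1+\epsilon/12)\ee_i^\top\LL_F^{-1}\ee_i \leq (1+\epsilon/12)(n^2-1)/6$ from $\trace{\LL_F^{-1}}\leq(n^2-1)/6$. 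Doing the same for $\YY$ and adding, the total solver error is a purely additive quantity of order $\delta\cdot n^2$. The final move is to promote this additive bound to the required multiplicative one: since $\ee_i^\top\LL_F^{-1}\ee_i\geq 1/(2n)$, it suffices that the additive error be at most $\tfrac{\epsilon}{12}\cdot\tfrac{1}{2n}$, and the stated choice of $\delta$ is precisely calibrated (through the factors $n^2$, $(n^2-1)/6$ and $(1\pm\epsilon/12)$) so that this holds. Composing the projection error $\epsilon/12$ with the solver error then yields the claimed $\epsilon/3$ after expanding the products of the $(1\pm\epsilon/12)$ factors.

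The main obstacle I expect is exactly this middle step: the solver controls rows in the energy norm whereas we need a column in the Euclidean norm, and, because an individual column norm $\smallnorm{\XX'\ee_i}$ may be arbitrarily small, no relative per-column solver bound is available. One is therefore forced to work with an absolute error and recover the multiplicative guarantee only at the end, via the global lower bound $\ee_i^\top\LL_F^{-1}\ee_i\geq 1/(2n)$. Threading the trace bounds so that the accumulated error lands within the $\epsilon/3$ budget is the delicate bookkeeping, and it is what fixes the otherwise peculiar form of $\delta$.
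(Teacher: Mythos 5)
Your overall architecture mirrors the paper's: dispatch the JL step by adding the two hypotheses, bridge the solver's row-wise $\LL_F$-norm guarantee to a column-wise Euclidean error via Lemma~\ref{lem:2norm}, and use $\ee_i^\top\LL_F^{-1}\ee_i\geq 1/(2n)$ to recover a multiplicative bound. However, there is a genuine gap, and it sits exactly where your two deviations from the paper prop each other up. First, your bound on the row-energy sum, $\sum_{j}\smallnorm{\XX'_{[j,:]}}_{\LL_F}^2=\trace{\LL_F^{-1}\bar{\XX}^\top\bar{\XX}}=O(n)$, invokes ``the projection nearly preserves this quadratic form.'' Since $\trace{\LL_F^{-1}\bBB^\top\QQ^\top\QQ\bBB}=\smallnorm{\QQ\bBB\LL_F^{-1/2}}_F^2$, that preservation is a statement about the columns of $\bBB\LL_F^{-1/2}$, whereas the lemma's hypotheses only grant norm preservation for the columns of $\XX=\bBB\LL_F^{-1}$ and $\YY$. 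The lemma is a deterministic statement with fixed hypotheses, so this is an additional assumption, not a derivable fact; from the stated hypotheses alone the best available bound is the paper's, namely $\sum_{j}\smallnorm{\XX'_{[j,:]}}_{\LL_F}^2\leq\lambda_{\max}(\LL_F)\smallnorm{\XX'}_F^2\leq n(1+\epsilon/12)\trace{\LL_F^{-1}}=O(n^3)$, which is weaker by a factor of $n^2$.

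Second, your additive-to-multiplicative conversion pairs a worst-case numerator with a worst-case denominator: you bound $\smallnorm{\XX'\ee_i}$ by its maximum $\sqrt{(1+\epsilon/12)(n^2-1)/6}=O(n)$ while bounding the denominator by its minimum of order $1/n$. These extremes cannot co-occur -- if $\smallnorm{\XX'\ee_i}$ is of order $n$, the denominator $\smallnorm{\XX'\ee_i}^2+\smallnorm{\YY'\ee_i}^2$ is of order $n^2$, not $1/n$ -- and this slack costs a factor of order $n^{3/2}$ compared with the paper's argument, which normalizes the solver error by $\max\bigl(\smallnorm{\XX'\ee_i},\smallnorm{\YY'\ee_i}\bigr)\geq\sqrt{(1-\epsilon/12)/(4n)}$ through a two-case analysis, so that numerator and denominator scale together. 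The two issues interact fatally: with your (unjustified) $O(n)$ trace bound, the lossy conversion happens to fit the stated $\delta=\Theta(\epsilon/n^3)$; but once the trace bound is repaired to the hypotheses-only $O(n^3)$, your conversion needs $\delta=O(\epsilon/n^4)$, and the stated $\delta$ misses by a factor of $n$. To close the proof from the given hypotheses you must replace the crude max/min conversion by the paper's relative, case-split conversion; your per-column application of Lemma~\ref{lem:2norm} (which saves a $\sqrt{n}$ over the paper's Frobenius-norm detour) is a nice refinement, but it does not make up the deficit.
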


\begin{proof}
According to triangle inequality, one has
\begin{align*}
       & \left| \smallnorm{\tilde{\XX}\ee_i} - \smallnorm{\XX'\ee_i} \right|  \leq \smallnorm{(\tilde{\XX}-\XX')\ee_i} \leq \smallnorm{\tilde{\XX}-\XX'}_F  \\
     = & \sqrt{\sum_{j=1}^{t} \norm{\tilde{\XX}_{[j, :]} - \XX'_{[j, :]}}^2} \leq \sqrt{\sum_{j=1}^{t} n^2\norm{\tilde{\XX}_{[j, :]} - \XX'_{[j, :]}}_{\LL_{F}}^2} \\
  \leq & \sqrt{\sum_{j=1}^{t} \delta^2 n^2 \norm{\XX'_{[j, :]}}_{\LL_{F}}^2} \leq \delta n\sqrt{n}\smallnorm{\XX'}_F\\
  \leq & \delta n \sqrt{n\sum_{i \in F} (1+\epsilon/12)\smallnorm{\XX\ee_i}^2} \\
  \leq & \delta n \sqrt{(1+\epsilon/12)n\sum_{i \in F} \ee_i^\top \LL_{F}^{-1} \ee_i} \\
  \leq & \delta n  \sqrt{(1+\epsilon/12)n(n^2-1)/6}.
\end{align*}
Similarly, for $\smallnorm{\YY'\ee_i}$ and $\smallnorm{\tilde{\YY}\ee_i}$, one  obtains
\begin{equation}
   \left| \smallnorm{\tilde{\YY}\ee_i} - \smallnorm{\YY'\ee_i} \right| \leq \delta n \sqrt{(1+\epsilon/12)n(n^2-1)/6}. \notag
\end{equation}


On the other hand,
\begin{align*}
  & \norm{\XX'\ee_i}^2 + \norm{\YY'\ee_i}^2 \geq (1-\epsilon/12) (\norm{\XX\ee_i}^2 + \norm{\YY\ee_i}^2) \\
  \geq & (1-\epsilon/12)\ee_i^\top \LL_F^{-1} \ee_i \geq (1-\epsilon/12)/2n.
\end{align*}
For $\norm{\XX'\ee_i}$ and $\norm{\YY'\ee_i}$, at least one is no less than $\sqrt{\frac{1-\epsilon/12}{4n}}$.  We only consider the case  $\smallnorm{\XX'\ee_i} \geq \sqrt{\frac{1-\epsilon/12}{4n}}$, since the $\smallnorm{\YY'\ee_i} \geq \sqrt{\frac{1-\epsilon/12}{4n}}$ can be handled in a similar way.

For  $\smallnorm{\XX'\ee_i} \geq \sqrt{\frac{1-\epsilon/12}{4n}}$, we distinguish two cases: (i) $\smallnorm{\YY'\ee_i} > \smallnorm{\XX'\ee_i} \geq  \sqrt{\frac{1-\epsilon/12}{4n}}$,  (ii)   $\smallnorm{\XX'\ee_i} \geq \smallnorm{\YY'\ee_i}$.
For case (i), one has
\begin{equation}
  \frac{\left|\smallnorm{\XX'\ee_i}- \smallnorm{\tilde{\XX}\ee_i}\right|}{\smallnorm{\XX'\ee_i}} \leq  2\delta n^2 \sqrt{\frac{(1+\epsilon/12)(n^2-1)}{6(1-\epsilon/12)}} \leq \frac{\epsilon}{36},\notag
\end{equation}
Based on this result, one  further obtains
\begin{align*}
&\big|\smallnorm{\XX'\ee_i}^2- \smallnorm{\tilde{\XX}\ee_i}^2\big|  \\
= &\big|\smallnorm{\XX'\ee_i}- \smallnorm{\tilde{\XX}\ee_i}\big|\cdot\big|\smallnorm{\XX'\ee_i}+ \smallnorm{\tilde{\XX}\ee_i}\big| \\
\leq & \frac{\epsilon}{36}\left (2+\frac{\epsilon}{36}\right) \smallnorm{\XX'\ee_i}^2
\leq  \frac{\epsilon}{12} \smallnorm{\XX'\ee_i}^2 \leq \frac{\epsilon}{6} \smallnorm{\XX'\ee_i}^2,
\end{align*}
which means $\smallnorm{\XX'\ee_i}^2 \mathop{\approx}\limits^{\epsilon/6} \smallnorm{\tilde{\XX}\ee_i}^2$. Similarly, we can prove $\smallnorm{\YY'\ee_i}^2 \mathop{\approx}\limits^{\epsilon/6} \smallnorm{\tilde{\YY}\ee_i}^2$. Combining these relations with the initial condition, one  gets  $\smallnorm{\XX\ee_i}^2+\smallnorm{\YY\ee_i}^2 \mathop{\approx}\limits^{\epsilon/3} \smallnorm{\tilde{\XX}\ee_i}^2+\smallnorm{\tilde{\YY}\ee_i}^2$.

 For case (ii), one has
\begin{equation}
  \smallnorm{\XX'\ee_i}^2 \leq \smallnorm{\XX'\ee_i}^2+\smallnorm{\YY'\ee_i}^2 \leq 2\smallnorm{\XX'\ee_i}^2. \notag
\end{equation}
By using a similar process as above leads to
\begin{equation}
  \frac{\big|\smallnorm{\YY'\ee_i}- \smallnorm{\tilde{\YY}\ee_i}\big|}{\smallnorm{\XX'\ee_i}} \leq  2\delta n^2 \sqrt{\frac{(1+\epsilon/12)(n^2-1)}{6(1-\epsilon/12)}} \leq \frac{\epsilon}{36}.\notag
\end{equation}
Based on this obtained result, we  further have
\begin{align*}
& \big|\smallnorm{\YY'\ee_i}^2- \smallnorm{\tilde{\YY}\ee_i}^2\big|  \\
= & \big|\smallnorm{\YY'\ee_i}- \smallnorm{\tilde{\YY}\ee_i}\big|\cdot \big|\smallnorm{\YY'\ee_i}+ \smallnorm{\tilde{\YY}\ee_i}\big| \\
\leq & \frac{\epsilon}{36} \left(2+\frac{\epsilon}{36}\right) \smallnorm{\XX'\ee_i}^2
\leq  \frac{\epsilon}{12} \big(\smallnorm{\XX'\ee_i}^2+\smallnorm{\YY'\ee_i}^2\big),
\end{align*}
\begin{equation}
  \big|\smallnorm{\XX'\ee_i}^2- \smallnorm{\tilde{\XX}\ee_i}^2\big| \leq \frac{\epsilon}{12} (\smallnorm{\XX'\ee_i}^2+\smallnorm{\YY'\ee_i}^2). \notag
\end{equation}

Combining the above-obtained results, one obtains
\begin{align}
 & \frac{\big|(\smallnorm{\XX'\ee_i}^2+ \smallnorm{\YY'\ee_i}^2)-(\smallnorm{\tilde{\XX}\ee_i}^2 +\smallnorm{\tilde{\YY}\ee_i}^2)\big|}{\smallnorm{\XX'\ee_i}^2 + \smallnorm{\YY'\ee_i}^2} \notag\\
 \leq & \frac{\big|\smallnorm{\XX'\ee_i}^2-\smallnorm{\tilde{\XX}\ee_i}^2\big|+\big|\smallnorm{\YY'\ee_i}^2-\smallnorm{\tilde{\YY}\ee_i}^2\big|}{\smallnorm{\XX'\ee_i}^2 + \smallnorm{\YY'\ee_i}^2}\leq \frac{\epsilon}{6},  \notag
\end{align}
which, together with  the initial condition, leads to~\eqref{eiLLei}.
\end{proof}

%
%

\subsection{Fast algorithm for approximating $\Delta(e)$}

Based on Lemmas \ref{lem:num1}, \ref{lem:num2} and \ref{lem:dom}, we propose an algorithm $\textsc{OpinionComp}$ approximating $\Delta(e)$ for every candidate edge $e$ in  set $Q$,  the outline of  which is presented in Algorithm \ref{alg:2}, and the  performance of which is given in Theorem~\ref{lem:opcomp}.
	\begin{algorithm}
		\caption{\textsc{OpinionComp} $(\calG, Q, \epsilon, \eta)$}
		\label{alg:2}
		\Input{
			A graph $\calG$; a candidate edge set $Q$; a real number $0 < \epsilon <1/2$; a real number $1/2 < \eta < 1$ \\
		}
		\Output{
			$\{(e, \tilde{\Delta}(e)) | e \in Q\}$
			
		}
		Set $\delta_1 = \frac{\epsilon}{2n^2\sqrt{6(n^2-1)}}$,
        $\delta_2 = \frac{(1-\eta)\epsilon}{n^2\sqrt{6(n^2-1)}}$ and
        $\delta_3 =\frac{\epsilon}{72n^2} \sqrt{\frac{6(1-\epsilon/12)}{(1+\epsilon/12)(n^2-1)}}$ \;
        $t = \lceil 24\frac{\log(n-s)}{(\epsilon/12)^2}\rceil$ \;
		Compute  matrices $\bar{\BB}$ and $\WW$\ corresponding to $\LL_F$;			
		Generate random Gaussian matrices
		$\PP_{t\times \bar{m}},  \QQ_{t\times (n-s)}$\;
		Compute $\bar{\XX}=\PP\bBB$, $\bar{\YY}=\QQ \WW^{\frac{1}{2}}$ and $\bb$
		by sparse matrix multiplication in $O(tm)$ time\;
		$\hh = \SDDMSolver(\LL_F,  \boldsymbol{1},  \delta_1)$\;
		$\pp = \SDDMSolver(\LL_F,  \bb,  \delta_2)$\;
		\For{$i = 1$ to $t$}{
            $\tilde{\XX}_{[i, :]} = \SDDMSolver(\LL_{F},\bar{\XX}_{[i, :]},\delta_3)$ \;
            $\tilde{\YY}_{[i, :]} = \SDDMSolver(\LL_{F},\bar{\YY}_{[i, :]},\delta_3)$
        }
		
		\For{each $e\in Q$}{
			$j=$ the follower node to which $e$ is incident \;
			compute $\tilde{\Delta}(e) =\frac{\hh_j(1-\pp_j)}{1 + \norm{\tilde{\XX}\ee_j}^2+\norm{\tilde{\YY}\ee_j}^2}$
        }
		\Return $\{(e, \tilde{\Delta}(e)) | e \in Q\}$
	\end{algorithm}

	\begin{theorem}\label{lem:opcomp}
		For $0 < \eps < 1/2$,
		the value $\tilde{\Delta}(e)$ returned by
		$\textsc{OpinionComp}$ satisfies
		\[
		(1-\eps)\Delta(e)  \leq
		\tilde{\Delta}(e) \leq
		(1+\eps)\Delta(e)
		\]
		with high probability.
	\end{theorem}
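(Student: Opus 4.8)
The plan is to assemble the three term-wise approximations from Lemmas~\ref{lem:num1},~\ref{lem:num2}, and~\ref{lem:dom} into a single multiplicative error bound for the ratio defining $\Delta(e)$. Writing $N_1 \defeq \boldsymbol{1}^\top\LL_F^{-1}\ee_j$, $N_2 \defeq 1-\ee_j^\top\LL_F^{-1}\bb$, and $D_0 \defeq \ee_j^\top\LL_F^{-1}\ee_j$, the exact increment from Lemma~\ref{lem:update} is $\Delta(e) = N_1 N_2/(1+D_0)$, while the value produced by \textsc{OpinionComp} is $\tilde{\Delta}(e) = \hh_j(1-\pp_j)/\big(1+\norm{\tilde{\XX}\ee_j}^2+\norm{\tilde{\YY}\ee_j}^2\big)$. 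First I would record that every factor is strictly positive: $N_1>0$ because all entries of $\LL_F^{-1}$ are positive, and $N_2 \geq 1-\eta > 0$ because the candidate set $Q$ retains only edges whose follower endpoint satisfies $\ee_j^\top\LL_F^{-1}\bb \leq \eta < 1$. This positivity is what lets me work entirely with relative (multiplicative) errors and take products and quotients of the two-sided bounds.

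Next I would propagate the errors through numerator and denominator separately. For the numerator, Lemmas~\ref{lem:num1} and~\ref{lem:num2} give $\hh_j \mathop{\approx}\limits^{\epsilon/6} N_1$ and $(1-\pp_j) \mathop{\approx}\limits^{\epsilon/6} N_2$; multiplying the two two-sided inequalities yields $(1-\epsilon/6)^2 N_1 N_2 \leq \hh_j(1-\pp_j) \leq (1+\epsilon/6)^2 N_1 N_2$. For the denominator, Lemma~\ref{lem:dom} gives $\norm{\tilde{\XX}\ee_j}^2+\norm{\tilde{\YY}\ee_j}^2 \mathop{\approx}\limits^{\epsilon/3} D_0$; the key observation is that adding the constant $1$ can only preserve (indeed tighten) the relative error, since $(\epsilon/3)D_0 \leq (\epsilon/3)(1+D_0)$ upgrades the bounds to $(1-\epsilon/3)(1+D_0) \leq 1+\norm{\tilde{\XX}\ee_j}^2+\norm{\tilde{\YY}\ee_j}^2 \leq (1+\epsilon/3)(1+D_0)$. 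Hence the full denominator remains an $\epsilon/3$-approximation of $1+D_0$.

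Finally I would divide the numerator bound by the denominator bound to obtain
\[
\frac{(1-\epsilon/6)^2}{1+\epsilon/3}\,\Delta(e) \;\leq\; \tilde{\Delta}(e) \;\leq\; \frac{(1+\epsilon/6)^2}{1-\epsilon/3}\,\Delta(e),
\]
after which it remains only to verify the two elementary scalar inequalities $(1-\epsilon/6)^2 \geq (1-\epsilon)(1+\epsilon/3)$ and $(1+\epsilon/6)^2 \leq (1+\epsilon)(1-\epsilon/3)$ for $0<\epsilon<1/2$. Expanding, the first reduces to $\epsilon/3+\epsilon^2/36+\epsilon^2/3 \geq 0$, which always holds, and the second to $(13/36)\epsilon \leq 1/3$, i.e.\ $\epsilon \leq 12/13$, which holds since $\epsilon<1/2$. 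Together these give exactly $(1-\epsilon)\Delta(e) \leq \tilde{\Delta}(e) \leq (1+\epsilon)\Delta(e)$. The high-probability qualifier then follows from the hypotheses of Lemma~\ref{lem:dom}: the two Johnson--Lindenstrauss norm-preservation events (each holding with probability at least $1-1/(n-s)$ by Lemma~\ref{lemma:JL} with the chosen $t=\lceil 24\log(n-s)/(\epsilon/12)^2\rceil$) together with the $\textsc{Solve}$ accuracy events of Lemma~\ref{lem:solver}, all combined by a union bound over a polynomial-in-$n$ number of events.

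I expect the main obstacle to be the bookkeeping of the error budget rather than any conceptual difficulty: one must confirm that splitting the target tolerance as $\epsilon/6$ for each numerator factor and $\epsilon/3$ for the denominator is tight enough that the squared factors $(1\pm\epsilon/6)^2$ against the $(1\mp\epsilon/3)$ denominator still sandwich within $1\pm\epsilon$, and that the ``$+1$'' in the denominator does not inflate the relative error. Both points are settled by the elementary inequalities above, so once positivity and the per-term lemmas are in hand, the combination is routine.
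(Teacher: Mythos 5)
Your proof is correct and follows essentially the same route as the paper's: it combines the $\epsilon/6$-approximations of the two numerator factors (Lemmas~\ref{lem:num1} and~\ref{lem:num2}) with the $\epsilon/3$-approximation of the denominator (Lemma~\ref{lem:dom}) multiplicatively, and then checks elementary scalar inequalities to land inside $1\pm\epsilon$. The paper's own proof is a one-line version of this same computation, and your additions --- the positivity remarks, the observation that the ``$+1$'' in the denominator only tightens the relative error, and the explicit verification that $(1\pm\epsilon/6)^2$ against $(1\mp\epsilon/3)$ stays within $1\pm\epsilon$ for $\epsilon<1/2$ --- are precisely the details the paper leaves implicit.
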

	\begin{proof}
Using Lemmas \ref{lem:num1},~\ref{lem:num2}, and~\ref{lem:dom}, one has
			\begin{align}
			\frac{ |\Delta(e)-\tilde{\Delta}(e)| }{\Delta(e)}& \leq\frac{|(1+\eps /6)^2\Delta(e)-\Delta(e)|}{(1-\epsilon/3)\Delta(e)}\leq \eps, \nonumber
			\end{align}
as desired.
	\end{proof}

	
\subsection{Fast Algorithm for Objective Function }

By applying Algorithm~\ref{alg:2} to approximate $\Delta(e)$,  we propose a fast greedy algorithm $\textsc{Approx}(\calG, Q, k, \epsilon, \eta)$ in Algorithm~\ref{alg:3}, which solves Problem~\ref{prob:om}.  Algorithm~\ref{alg:3} performs $k$ rounds (Lines 2-6).  In every round, it takes time $\tilde{O}(m\eps^{-2})$ to  call $\textsc{OpinionComp}$ and update related qualities.  Consequently,  the time complexity of  Algorithm~\ref{alg:3} is $\tilde{O} (mk\eps^{-2})$.

\begin{algorithm}
		\caption{\textsc{Approx}$(\calG, Q,k, \epsilon,\eta)$}
		\label{alg:3}
		\Input{
			A graph $\calG$; a candidate edge set $Q$; an integer $k \leq |Q|$; a real number $0 < \epsilon < 1/2$; a real number $1/2 < \eta < 1$
		}
		\Output{
			$T$: a subset of $Q$ with $|T| = k$
		}
		Initialize solution $T = \emptyset$ \;
		\For{$i = 1$ to $k$}{
			$\{e, \tilde{\Delta}(e) | e \in Q \setminus T \} \gets \textsc{OpinionComp}(\calG, Q\setminus T, \epsilon,\eta)$ \;
			Select $e_i$ s.t.  $e_i \gets \mathrm{arg\, max}_{e \in Q \setminus T} \tilde{\Delta}(e)$ \;
			Update solution $T \gets T \cup \{ e_i \}$ \;
			Update the graph $\calG \gets \calG(V, E \cup \{ e_i \})$
		}
		\Return $T$
	\end{algorithm}

Algorithm~\ref{alg:3} yields a $\kh{1 - \frac{1}{e} - \eps}$ approximation solution to Problem~\ref{prob:om}.
\begin{theorem}\label{thm:68}
Let $T^*$ be the optimal solution to Problem \ref{prob:om}, namely,
\begin{equation*}
 H(T^*) =\argmax_{T \subset Q,|T|=k} H(T).
\end{equation*}
Then, the set $T$ returned by Algorithm \ref{alg:3} satisfies
\begin{equation*}
H(T) -H(\emptyset) \geq \left(1-\frac{1}{e}-\epsilon \right) (H(T^*) -H(\emptyset)).
\end{equation*}
\end{theorem}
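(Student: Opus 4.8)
The plan is to adapt the classical Nemhauser--Wolsey--Fisher analysis for greedy maximization of a monotone submodular function, accounting for the fact that Algorithm~\ref{alg:3} selects edges by the \emph{approximate} marginal gains $\tilde{\Delta}(e)$ rather than the exact ones $\Delta(e)$. Write $g(S) \defeq H(S) - H(\emptyset)$, which by Theorems~\ref{thm:MI} and~\ref{thm:SM} is monotone increasing and submodular with $g(\emptyset) = 0$. Let $T_i = \{e_1, \ldots, e_i\}$ denote the greedy solution after $i$ rounds, and for a candidate edge $e$ let $\delta_i(e) \defeq H(T_{i-1} \cup \{e\}) - H(T_{i-1})$ be its true marginal gain in round $i$ and $\tilde{\delta}_i(e)$ the value returned by $\textsc{OpinionComp}$ on the current graph. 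By Theorem~\ref{lem:opcomp}, each call satisfies $(1-\epsilon)\delta_i(e) \le \tilde{\delta}_i(e) \le (1+\epsilon)\delta_i(e)$ with high probability; a union bound over the $k$ rounds (and over the edges examined in each round) guarantees that all these estimates hold simultaneously with high probability, so I may treat them as deterministic in the remainder.

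First I would establish a one-step contraction. By submodularity and monotonicity, for the optimal set $T^*$ with $|T^*| = k$,
\begin{equation*}
g(T^*) - g(T_{i-1}) \le \sum_{e \in T^*} \delta_i(e),
\end{equation*}
so some $e^\star \in T^*$ has $\delta_i(e^\star) \ge \tfrac{1}{k}\big(g(T^*) - g(T_{i-1})\big)$. Since the algorithm picks $e_i$ maximizing $\tilde{\delta}_i(\cdot)$, applying the two-sided oracle bound to both $e_i$ and $e^\star$ gives
\begin{equation*}
\delta_i(e_i) \ge \frac{\tilde{\delta}_i(e_i)}{1+\epsilon} \ge \frac{\tilde{\delta}_i(e^\star)}{1+\epsilon} \ge \frac{1-\epsilon}{1+\epsilon}\,\delta_i(e^\star) \ge \frac{1-\epsilon}{(1+\epsilon)k}\big(g(T^*) - g(T_{i-1})\big).
\end{equation*}
Crucially, because $e_i$ is actually added to the graph, the \emph{true} objective increases by exactly $\delta_i(e_i)$, i.e. $g(T_i) - g(T_{i-1}) = \delta_i(e_i)$. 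Writing $\alpha \defeq \frac{1-\epsilon}{1+\epsilon}$ and rearranging yields $g(T^*) - g(T_i) \le \big(1 - \tfrac{\alpha}{k}\big)\big(g(T^*) - g(T_{i-1})\big)$.

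Finally I would unroll this recursion over the $k$ rounds, using $g(T^*) - g(T_0) = g(T^*)$ and $1 - x \le e^{-x}$, to obtain
\begin{equation*}
g(T^*) - g(T_k) \le \Big(1 - \frac{\alpha}{k}\Big)^k g(T^*) \le e^{-\alpha} g(T^*),
\end{equation*}
hence $g(T_k) \ge (1 - e^{-\alpha}) g(T^*)$. It then remains to convert the multiplicative factor $\alpha = \frac{1-\epsilon}{1+\epsilon} \ge 1 - 2\epsilon$ into the claimed additive loss: since $e^{-\alpha} \le e^{-1} e^{2\epsilon} \le e^{-1}(1 + c\epsilon)$ for an absolute constant $c$ when $\epsilon < 1/2$, we get $1 - e^{-\alpha} \ge 1 - \tfrac{1}{e} - c'\epsilon$. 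Running $\textsc{OpinionComp}$ with the error parameter rescaled by the constant $c'$ (which leaves the complexity $\tilde{O}(mk\epsilon^{-2})$ unchanged) then delivers exactly $H(T) - H(\emptyset) \ge (1 - \tfrac{1}{e} - \epsilon)(H(T^*) - H(\emptyset))$.

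The step I expect to be the main obstacle is the bookkeeping that turns the per-round multiplicative $(1\pm\epsilon)$ oracle error into a clean additive $\epsilon$ degradation of the $(1-1/e)$ ratio: one must carefully separate the \emph{approximate} gains used for \emph{selection} from the \emph{true} gains that drive the actual increase of $H$, and confirm that the high-probability guarantee of Theorem~\ref{lem:opcomp} survives the union bound across all $k$ greedy rounds.
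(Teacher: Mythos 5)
Your proposal is correct and follows essentially the same route as the paper's proof: the standard Nemhauser--Wolsey--Fisher contraction argument on the true marginal gains, with selection driven by the approximate gains $\tilde{\Delta}(e)$, a geometric recursion, and a final conversion of the multiplicative loss into the additive $\epsilon$ term. The only difference is one of care rather than of method: where the paper loosely asserts that $\textsc{Approx}$ picks an edge with at least $(1-\epsilon)$ times the maximum true gain, you correctly derive the factor $\frac{1-\epsilon}{1+\epsilon}$ from the two-sided oracle bound, recover the stated bound by a constant rescaling of the error parameter, and note the union bound over the $k$ rounds --- details the paper's proof glosses over.
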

\begin{proof}
The main difference between algorithms $\textsc{Exact}$ and $\textsc{Approx}$ is as follows. At each step, $\textsc{Exact}$ selects an edge with maximum marginal gain, while $\textsc{Approx}$ selects an edge with at least $(1-\epsilon)$ times the maximum marginal gain. Define $T_i=\textsc{Approx}(\calG,Q,i,\epsilon,\eta)$, then according to submodularity of $H(\cdot)$,
\begin{equation*}
H(T_{i+1}) - H(T_i) \geq \frac{1-\epsilon}{k} (H(T^*)-H(T_i))
\end{equation*}
holds for any $i$, which implies
\begin{equation*}
H(T_{i+1})-H(T^*) \leq \left(1-\frac{1-\epsilon}{k}\right) (H(T_i) -H(T^*)).
\end{equation*}
Thus, we have
\begin{align*}
H(T^*) - H(T_k)
\leq & \left(1-\frac{1-\epsilon}{k}\right)^k (H(T^*)-H(T_0)) \\
\leq & \left(\frac{1}{e}+\epsilon\right) (H(T^*)-H(T_0)).
\end{align*}
Considering $T_0 =\emptyset$, we  directly finish the proof.
\end{proof}

\section{Experiments}

In this section, we will study the performance of our two heuristic algorithms $\textsc{Exact}$ and $\textsc{Approx}$ in terms of the effectiveness and the efficiency by implementing experiments on various real-life networks with different scales. The selected data sets of real networks are publicly available in the KONECT~\cite{Ku13} and SNAP at website \url{https://snap.stanford.edu}, detailed information for whose largest components is presented in the first three columns of Table~\ref{tab:time}. For the convenience of using the linear solver  $\SDDMSolver$~\cite{KySa16}, which can be found at \url{https://github.com/danspielman/Laplacians.jl}, all our experiments are programmed in Julia using a single thread, and are run on a machine equipped with  32G RAM and 4.2 GHz Intel i7-7700 CPU. In our experiments, the candidate edge set $Q$ contains all nonexistent edges satisfying constraints in  Section~\ref{ProbStat}, for each of which, one end is in set $S_1$ and the other end is in set $F$.

	\begin{table*}[htbp]
  \centering
  \caption{The running time (seconds, $s$) and the relative error  of Algorithms~\ref{alg:1} and~\ref{alg:3} on  real-world networks for various $\epsilon$.}
  \label{tab:time}
  \fontsize{8}{8}\selectfont
  \begin{threeparttable}
    \begin{tabular}{lrrcC{1.2cm}C{1.2cm}C{1.2cm}C{1.2cm}cC{1.1cm}C{1.1cm}C{1.1cm}}
    \Xhline{2\arrayrulewidth}
    \multirow{2}{*}{Network}&
    \multirow{2}{*}{ Nodes}&
    \multirow{2}{*}{ Edges}&&
    \multicolumn{4}{c}{Running time ($s$) for $\textsc{Exact}$ and $\textsc{Approx} $ } & &
    \multicolumn{3}{c}{Relative error ($\times 10^{-2}$)} \cr
    \cmidrule{5-8} \cmidrule{10-12}
    &  & && $\textsc{Exact}$ & $\eps =0.3$ & $\eps =0.2$ & $\eps =0.1$ & &$\eps =0.3$ & $\eps =0.2$ & $\eps =0.1$ \cr
    \midrule
    494-bus             & 494   & 1,080& & 0.837  & 0.655  & 1.382  & 5.314 & &$\num{1.07}$  & $\num{0.75}$ & $\num{0.51}$\cr
    Bcspwr09            & 1,723 & 4,117 && 10.90  & 3.350  & 6.972  & 27.16 & &$\num{4.60}$  & $\num{2.43}$ & $\num{1.25}$ \cr
    Hamster         & 2,426 & 16,630& &11.27  & 3.377   & 6.936  & 29.27  &&$\num{2.21}$  & $\num{1.48}$ & $\num{0.83}$ \cr
    USGrid        		& 4,941 & 6,594 && 136.8  & 11.86   & 25.76  & 98.63 & &$\num{1.01}$  & $\num{0.48}$ & $\num{0.39}$\cr
    Bcspwr10            & 5,300 & 8,271 && 152.5  & 16.23 & 35.12  & 136.1  &&$\num{3.74}$  & $\num{1.75}$ & $\num{0.91}$\cr
    Reality	            & 6,809 & 7,680 && 300.9  & 8.715 & 18.40  & 71.56 &&$\num{0.50}$  & $\num{0.43}$ & $\num{0.17}$\cr
    PagesGovernment     & 7,057 & 89,455&& 325.4  & 93.21  & 195.8& 766.3 &&$\num{4.52}$  & $\num{2.45}$ & $\num{0.52}$\cr
    HepPh               &11,204 &117,619&& 1127   & 135.9  & 290.8  & 1120  & &$\num{5.30}$  & $\num{3.89}$ & $\num{1.95}$\cr
    Anybeat      		& 12,645 & 49,132 &&1601  & 59.43  & 128.5 & 501.1  &&$\num{0.20}$  & $\num{0.18}$ & $\num{0.08}$\cr
    Epinions            & 26,588 & 100,120 &&12515& 181.1  & 385.8 & 1568  &&$\num{0.70}$  & $\num{0.35}$ & $\num{0.09}$\cr
    EmailEU             & 32,430 & 54,397 && 22626& 79.69  & 173.1 & 703.4 &&$\num{1.49}$  & $\num{0.51}$ & $\num{0.16}$ \cr
    GemsecRO            & 41,773 &125,826 &&51236 & 378.3   & 803.2  & 3180 & &$\num{1.52}$  & $\num{1.00}$ & $\num{0.81}$\cr
    Brightkite          & 56,739 &212,945 &&--    & 443.3  & 1002   & 3827 &&--&--&--\cr
    LiveMocha           &104,103&2,193,083&& --  &  9071  & 20129  & 81203 &&--&--&--\cr
    Douban              &154,908&327,162&& --     &  1249   & 2732 & 11307 &&--&--&--\cr
    Dblp2010            &226,413&716,460&& --     &  2183  & 4837  & 19623 &&--&--&--\cr
    TwitterFollows      &404,719&713,319&& --     &  2331   & 5261  & 19942 &&--&--&--\cr
    Delicious           &536,108&1,365,961&&--   &  5540  & 12433  & 49826 &&--&--&--\cr
    FourSquare          &639,014&3,214,986&&--   & 10347  & 23214  & 92170 &&--&--&--\cr
    YoutubeSnap         &1,134,890&2,987,624&&--& 13689  & 30715 &124913 &&--&--&--\cr
    \Xhline{2\arrayrulewidth}
    \end{tabular}
    \end{threeparttable}
\end{table*}




\subsection{Effectiveness of Greedy Algorithms}

We first compare the effectiveness of our algorithms with the optimum solution and the solution of a random scheme that randomly choose $k$ edges from $Q$ to add.   To this end, we execute experiments on four small real networks: Karate club  with 34 nodes and 78 edges, Dolphins with 62 nodes and 159 edges, Tribes with 16 nodes and 58 edges, and FirmHiTech with 33 nodes and 147 edges, which allow us to compute the optimal set of added edges. We randomly select three 0-leaders and three 1-leaders. Then, by using different strategies, we add $k=1,2,\ldots,5$ edges, for each of which, one end is linked to a 1-leader  and the other end is connected to a follower. The result is reported in Figure~\ref{ComOpt1}, which shows that the solutions returned by our two greedy algorithms are the same as or very close to the optimum solution, and are far better than the  random scheme, as well as the theoretical guarantees.


\begin{figure}[tb]
		\centering
		\includegraphics[width=.95\linewidth]{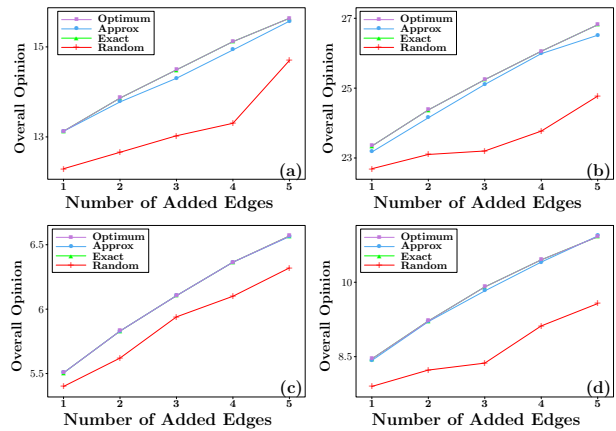}
		\caption{Overall opinion of  follower nodes  as a function of the number of added edges for our two algorithms, random and the optimum solution on four networks: Karate club  (a), Dolphins (b), Tribes (c), and FirmHiTech (d). $\epsilon$ is set to be  0.3  for the approximation algorithm $\textsc{Approx}$. \label{ComOpt1}}
\end{figure}

We also compare the results returned by our algorithms with other four baseline schemes: TopCloseness, TopBetweenness, TopPageRank and TopDegree, on relatively large real networks, in order to further demonstrate their effectiveness. The cardinalities of both  $S_0$ and  $S_1$ are equal to 10. For these centrality~\cite{MuYo19} based baselines, the added edges are just the $k$ edges linking nodes in $S_1$ and nodes in the follower set previously nonadjacent to the corresponding 1-leader, which have the highest closeness, betweenness, PageRank, and degree centrality in  original network. For each real network, we calculate the overall opinion of followers in the original graph and increase it by generating up to $k = 10,20,\ldots,50$ new edges, applying our greedy algorithms and the four baseline strategies of edge addition. After adding each edge by different methods, we compute and record the overall opinion. The results are plotted in Figure~\ref{ComBase1}, which indicates that for each network, our two greedy algorithms outperform the baseline strategies.

\begin{figure}[tb]
\centering
\includegraphics[width=.95\linewidth]{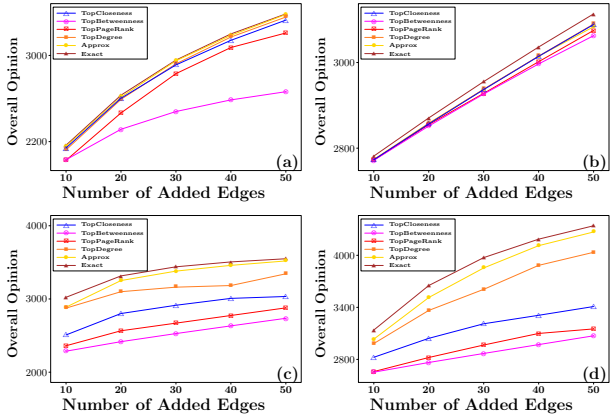}
\caption{Overall opinion of follower nodes for our two algorithms, and four baseline heuristics on four real netowrks: Reality (a), PagesGovernment (b), USgrid (c), Bcspwr10 (d). \label{ComBase1} }
\end{figure}

\subsection{Comparison of Performance between our Greedy Algorithms}

As shown above, both of our algorithms $\textsc{Exact}$ and $\textsc{Approx}$ exhibit good effectiveness, compared with baseline strategies of edge addition. Here we compare the performance of algorithms $\textsc{Exact}$ and $\textsc{Approx}$. We first demonstrate that $\textsc{Approx}$ is more efficient than $\textsc{Exact}$. For this purpose, we compare the running time of algorithms $\textsc{Exact}$ and $\textsc{Approx}$ on real-life networks in Table~\ref{tab:time}. For each network, we choose stochastically ten 1-leaders and ten 0-leaders, with the remaining nodes being followers. Then, we add $k = 50$ edges between 1-leaders and followers by algorithms $\textsc{Exact}$ and $\textsc{Approx}$ to maximize the overall opinion. Table~\ref{tab:time} shows that for moderate $\epsilon$, $\textsc{Approx}$ is much faster than $\textsc{Exact}$, which is more obvious for larger networks. In particular, $\textsc{Approx}$ is scalable to massive networks with one million nodes. For example, for the last eight networks in Table~\ref{tab:time}, such as YoutubeSnap with over $10^6$ nodes, $\textsc{Exact}$ can't run due to the memory limitation, while $\textsc{Approx}$ still works well.

We proceed to compare the effectiveness of algorithms $\textsc{Exact}$ and $\textsc{Approx}$. We define $\gamma$ and $\tilde{\gamma}$ as the increase of overall opinion of followers after adding edges selected, respectively, by  $\textsc{Exact}$ and  $\textsc{Approx}$, and define $\delta=|\gamma-\tilde{\gamma}|/\gamma$ as the relative error between $\gamma$ and $\tilde{\gamma}$. The results of relative errors for different real networks and various parameter $\epsilon$ are presented in Table~\ref{tab:time}, which demonstrates that for $\epsilon=$0.1, 0.2, and 0.3, relative errors $\delta$ are very small, with the largest value equal to $5.3\% $. Thus, the results turns by $\textsc{Approx}$ are very close to those associated with $\textsc{Approx}$, implying that $\textsc{Approx}$ is both effective and efficient.

\section{Conclusions}

In this paper, we examined the problem of maximizing the influence of opinion for leaders in a social network with $n$ nodes and $m$ edges by adding $k$ new edges based on the discrete leader-follower DeGroot model for opinion dynamics with $s \ll n$ leaders. The problem is inherently a combinatorial optimization problem that can be applied to various domains. We established the monotonicity and submodularity of the objective function. We put forward two heuristic algorithms. The former returns a $(1-\frac{1}{e})$ approximation of the optimal solution in time $O(n^3)$, while the latter has a $(1-\frac{1}{e}-\epsilon)$ approximation ratio and $\tilde{O}(km\epsilon^{-2})$ complexity. Finally, we performed experiments on real networks of different scales, demonstrating the efficiency and effectiveness of our fast algorithm that is scalable to large-scale networks with more than one million nodes. In future work, we plan to extend or modify  our algorithm to other optimization problems for opinion dynamics, such as minimizing risk of conflict, disagreement, and so on.

\section*{Acknowledgements}
The work was supported by the National Key
R \& D Program of China (Nos. 2018YFB1305104 and 2019YFB2101703), the
National Natural Science Foundation of China (Nos. 61872093,  U20B2051 and U19A2066),  the Shanghai Municipal Science and Technology
Major Project  (Nos.  2018SHZDZX01 and  2021SHZDZX0103), and ZJLab.

\bibliographystyle{ACM-Reference-Format}
\balance
\bibliography{kedges}

\end{document}